\definecolor{darkred}{RGB}{139,0,0}
\newcommand{\be}{\begin{equation}}
\newcommand{\ee}{\end{equation}}
\newcommand{\ben}{\begin{equation*}}
\newcommand{\een}{\end{equation*}}
\newcommand{\bbm}{\begin{bmatrix}}
\newcommand{\ebm}{\end{bmatrix}}
\DeclareMathOperator{\interior}{int}
\DeclareMathOperator{\li}{Li}
\newcommand{\bBm}{\begin{Bmatrix}}
\newcommand{\eBm}{\end{Bmatrix}}
\newcommand{\bvm}{\begin{vmatrix}}
\newcommand{\evm}{\end{vmatrix}}
\newcommand{\bVm}{\begin{Vmatrix}}
\newcommand{\eVm}{\end{Vmatrix}}
\newcommand{\bpm}{\begin{pmatrix}}
\newcommand{\epm}{\end{pmatrix}}
\newcommand{\bnm}{\begin{matrix}}
\newcommand{\enm}{\end{matrix}}
\newcommand{\bi}{\begin{itemize}}
\newcommand{\ei}{\end{itemize}}
\newcommand{\eor}{\ensuremath{\hfill\blacksquare}}
\newtheorem{remark}{Remark}
\newtheorem{lemma}{Lemma}
\newtheorem{proposition}{Proposition}
\newtheorem{corollary}{Corollary}
\def\BibTeX{{\rm B\kern-.05em{\sc i\kern-.025em b}\kern-.08em
    T\kern-.1667em\lower.7ex\hbox{E}\kern-.125emX}}
\begin{document}
\title{Safe Navigation in Cluttered Environments Via Spline-Based Harmonic Potential Fields }
\author{Theodor-Gabriel Nicu, \IEEEmembership{Student Member, IEEE}, Florin Stoican, \IEEEmembership{Member, IEEE}, Daniel-Mihail Ioan, \IEEEmembership{Member, IEEE}, and Ionela Prodan,  \IEEEmembership{Member, IEEE}
\thanks{Theodor-Gabriel Nicu, Florin Stoican and Daniel Ioan are all with the National University of Science and Technology Politehnica Bucharest, Department of Automatic Control and Systems Engineering and also with the CAMPUS Research Institute (e-mail: {theodor.nicu,florin.stoican,daniel{\_}mihail.ioan}@upb.ro). }
\thanks{Ionela Prodan is with University Grenoble Alpes, Grenoble INP$^\dagger$, LCIS, F-26000, Valence, France (e-mail: ionela.prodan@lcis.grenoble-inp.fr).}
}

\maketitle

\begin{abstract}

We provide a complete motion-planning mechanism that ensures target tracking and obstacle avoidance in a cluttered environment. For a given polyhedral decomposition of the feasible space, we adopt a novel procedure that constrains the agent to move only through a prescribed sequence of cells via a suitable control policy. 

For each cell, we construct a harmonic potential surface induced by a Dirichlet boundary condition given as a cardinal B-spline curve. A detailed analysis of the curve behavior (periodicity, support) and of the associated control point selection allows us to explicitly compute these harmonic potential surfaces, from which we subsequently derive the corresponding control policy. We illustrate that the resulting construction funnels the agent safely along the chain of cells from the starting point to the target.
 
\end{abstract}

\begin{IEEEkeywords}
Artificial Potential Field, Harmonic Functions, Cardinal B-spline Curves, Weighted Graph
\end{IEEEkeywords}

\section{Introduction}
\label{sec:introduction}

Motion planning \cite{lavalle2006planning,dong2014time,kanakis2024motion} remains one of the main topics of interest in control theory and autonomous systems. In this context, the \emph{Artificial Potential Field} (APF) approach \cite{khatib1986real,rimon1990exact,paternain2017navigation,lang2026feedback} is both popular and versatile. Broadly speaking, the idea is to construct a potential surface that encodes attraction to the target and repulsion from obstacles, and to apply a control action that drives the system along the negative gradient of this surface. In complex environments, however, trajectories may become trapped in local minima, issue which has spanned multiple approaches. Ad-hoc modifications of the surface, such as adding virtual hills \cite{park2004real} or reactive components \cite{haddad1998reactive}, often work in practice. Navigation functions are theoretically attractive because they guarantee the existence of a unique, non-degenerate global minimum \cite{khatib1986real,filippidis2011adjustable,rimon1991construction,koditschek1990robot,liu2024navigation}, but this comes at the price of convoluted formulations and conservative tuning parameters.

Another direction, which attempts to sidestep the shortcoming of the global (one-step) methods, is to decompose the motion-planning problem into two independent steps. First, the feasible space is partitioned into a disjoint collection of regions, typically polyhedral \cite{mahulea2020path}; second, one ensures the transition from one region to the next in such a way that obstacles are avoided and the target is reached. This idea was exploited in \cite{conner2003composition,conner2003construction}, which rely on the theory of the Laplace operator (used to describe heat propagation in solids \cite{barcena2024tracking,feng2025new}). By constructing a harmonic surface \cite{axler2013harmonic}, that is, a surface whose Laplacian vanishes in the interior of a region, the surface’s minima and maxima are confined to the boundary of that region. Hence, by suitably choosing the boundary conditions, one can force all trajectories entering a given region to exit it in finite time through a particular facet of the polyhedral region. A closed-form solution was first obtained in polar coordinates for the unit disk \cite{connolly1990path,conner2009flow}; a diffeomorphic transformation \cite{tanner2003nonholonomic} then mapped this solution to an arbitrary polyhedral set, yielding the formulation detailed in \cite{conner2003construction}. 

The caveat of this approach is that the boundary is defined as a piecewise function, which induces sharp changes in the surface gradient near the polyhedron’s vertices and imposes restrictive conditions on how the boundary can be specified. Consequently, in our preliminary work \cite{nicu2024safe} we proposed a framework in which the boundary curves are described as weighted sums of cardinal B-splines \cite{lyche2018foundations}. The harmonic surfaces derived from these boundary conditions were shown to funnel the agent from the initial point to the target. The sequence of cells was obtained by first modeling a directed graph and subsequently computing the shortest path through it. The current work builds and improves on this scaffolding in several directions:
\begin{enumerate}[label=\roman*)]
\item conditions on periodicity and support for the boundary function are provided, clarifying the tuning parameters involved in its construction;
\item a procedure for choosing the control points weighting the cardinal B-spline curve is detailed;
\item we generalize the harmonic potential formulation to work for any value of the B-spline order.
\end{enumerate}

The rest of the paper is structured as follows. Section~\ref{sec:statement} presents the problem statement. Section~\ref{sec:prelim} recalls prerequisites on cardinal B-splines. Section~\ref{sec:bsplines} details the construction of the potential surface from a cardinal B-spline boundary curve, including the harmonic function and control point selection. Section~\ref{sec:control} discusses the resulting control policy and provides an illustrative example. Finally, Section~\ref{sec:con} summarizes the main conclusions.

\clearpage

\subsection*{Notations}

Let $\mathbb X_{>n} = \{x\in \mathbb X:\: x>n\}$ where $\mathbb X$ is either $\mathbb Z$, the field of integers, or $\mathbb R$, the field of reals. For a compact set $\mathcal C$, $\interior {\mathcal C}$ stands for its interior and $\partial C$ for its boundary. $\mathcal{F}(\cdot)$ defines the Fourier Transform operator. Operation $\partial f/\partial x$ denotes the partial derivative of the multi-variate function $f(\cdot)$ after variable $x$. When the function is uni-variate, the operator becomes $df/dx$. Notation $\|x\|$ stands for the standard Euclidian norm applied to a vector $x\in \mathbb R^d$. For scalar $x\in \mathbb R$, notations $\lceil x \rceil$, $\lfloor x \rfloor\in \mathbb Z$ denote the smallest integer larger than $x$ and the largest integer smaller than $x$, respectively. $\text{sgn}(x)$ returns the sign of scalar $x\in \mathbb R$.

\section{Problem statement}
\label{sec:statement}

Constructing a potential field surface free of spurious local minima and maxima is a challenging task \cite{khatib1986real, koditschek1990robot,filippidis2011adjustable}. In this context, harmonic functions, solutions of the Laplace equation on a prescribed domain, are a popular choice \cite{conner2003composition}, as they address the problem in two stages: i) decomposing the free space into a union of cells and ii) prescribing, via local harmonic functions, the behavior within each cell so as to ensure a feasible path from the start point to the destination. 

Henceforth, consider a space cluttered with obstacles $\mathcal O_j$,
\begin{equation}
\label{eq:decomposition}
\mathbb R^2\setminus \bigcup_j \mathcal O_j=\bigcup_\ell P_\ell,
\end{equation}
decomposed into a collection of polyhedral, non-overlapping regions $P_\ell$ that form a polyhedral complex \cite{ziegler2012lectures}. By constraining the agent’s trajectory to follow a precomputed sequence of feasible cells $\{P_{\ell_1}\mapsto P_{\ell_2}\mapsto\dots \}$ the agent avoids the obstacles and arrives at the destination.
\begin{figure}[!ht]
    \centering
    \includegraphics[width=.85\columnwidth]{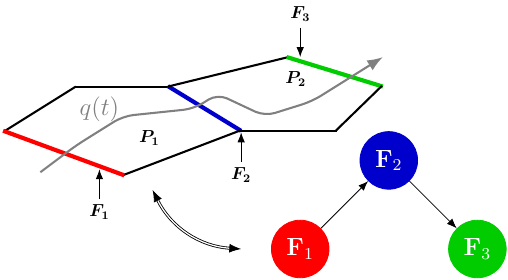}
    \caption{Proof of concept illustration for the connectivity graph construction}
    \label{fig:proof_concept}
\end{figure}

To illustrate this idea, consider two cells, $P_{1,2}$ and three facets, $F_{1,2,3}$ chosen such that $F_1\in P_1\setminus P_2$, $F_2=P_1\cap P_2$ and $F_3=P_2\setminus P_1$. Suppose the agent enters $P_1$ through facet $F_1$, transitions into $P_2$ through $F_2$, and exits $P_2$ through $F_3$, as shown in Fig.~\ref{fig:proof_concept}. Repeating this procedure along a prescribed sequence of cells yields trajectories as depicted in Fig.~\ref{fig:path}.

The remainder of the paper develops a modified harmonic surface construction that guarantees safe transitions between cells, and provides the necessary constructive details along the way.
\newpage

\section{Preliminaries on cardinal B-splines}
\label{sec:prelim}

To characterize the potential field within each cell, we consider cardinal B-splines whose properties are detailed next. 

The cardinal splines $B_{k,p,\mathbb Z}(t)$, of degree $p\geq 1$, are obtained recursively from relations
\begin{subequations}
\label{eq:cardinal_splines}
\begin{align}
\label{eq:cardinal_splines_a} M_0(\theta)&=\begin{cases}1, & \forall \theta \in [0,1),\\ 0, &\text{otherwise},\end{cases}\\
\label{eq:cardinal_splines_b} M_p(\theta)&=\frac{\theta}{p}M_{p-1}(\theta)+\frac{p+1-\theta}{p}M_{p-1}(\theta-1),\\
\label{eq:cardinal_splines_c} B_{k,p,\mathbb Z}(\theta)&=B_{0,p,\mathbb Z}(\theta-k)=M_p(\theta-k), \quad \forall k \in \mathbb Z,
\end{align}
\end{subequations}
\noindent with the properties of (given in their `local support' form \cite{lyche2018foundations}):
\begin{enumerate}[label=\textbf{P\arabic*)}]
    \item\label{prop:support} Local support:
        \begin{equation}
        \label{eq:prop_local_support}
            B_{k,p,\mathbb Z}(\theta)=0,\quad \forall \theta\notin [k,k+p+1);
        \end{equation}
    
    \item\label{prop:active_interval} Partition of unity:
        \begin{equation}
            \sum\limits_{k=m-p}^m B_{k,p,\mathbb Z}(\theta)=1,\quad \forall \theta\in [m, m+1);
        \end{equation}
    \item\label{prop:fourier} The Fourier transform of \eqref{eq:cardinal_splines_b} is given by
    \begin{equation}
        \mathcal F\bigl\{M_p(\theta)\bigr\}(\omega)=\left(\frac{1-e^{-j\omega}}{j\omega}\right)^{p+1}.
    \end{equation}
    \item\label{prop:derivative} The derivative of \eqref{eq:cardinal_splines_b} is given by
    \begin{equation}
        \frac{d}{d\theta}M_p(\theta)=M_{p-1}(\theta)-M_{p-1}(\theta-1).
    \end{equation}
\end{enumerate}
The standard cardinal B-spline family \eqref{eq:cardinal_splines_c} may be scaled with an arbitrary factor $\lambda\in \mathbb R_{>0}$, to obtain:
\begin{equation}
\label{eq:sigma}
\sigma_k(\theta)=B_{k,p,\mathbb Z}\left(\frac{\theta}{\lambda}\right)=M_p\left(\frac{\theta}{\lambda }-k\right),
\end{equation}
which, weighted with control points $P_k$, gives the B-spline curve
\begin{equation}
\label{eq:dirichlet_spline}
    h(\theta)=\sum\limits_{k\in \mathbb Z}P_k\sigma_k(\theta),\quad \forall \theta\in \mathbb R.
\end{equation}
\subsection*{Illustrative example}
In Fig.~\ref{fig:card_splines} we illustrate, for $p=3$, both the seed $M_\ell(\theta)$ functions, with $0\leq \ell\leq p$ computed as in \eqref{eq:cardinal_splines_a}--\eqref{eq:cardinal_splines_b} and two of the cardinal B-splines associated with $M_p(\theta)$, obtained as in \eqref{eq:cardinal_splines_c}, for $k\in \{-4,3\}$.

\begin{figure}[!ht]
    \centering
    \includegraphics[width=.9\columnwidth]{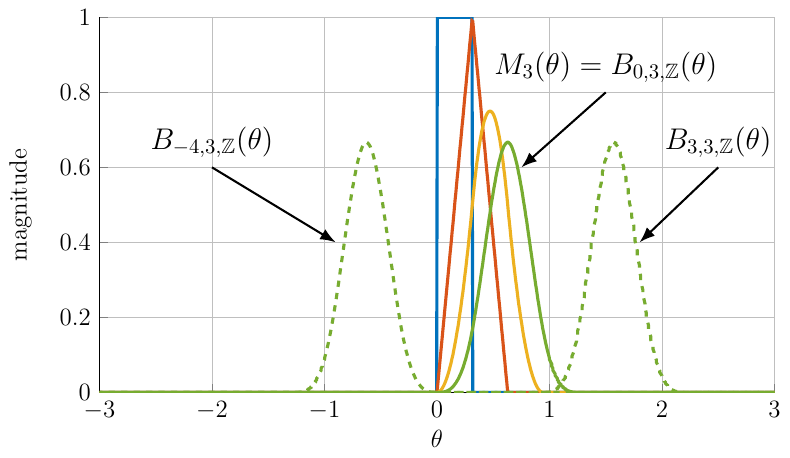}\hfill
    \caption{Cardinal B-splines}
    \label{fig:card_splines}
\end{figure}

\newpage

\section{Computing the potential surface with cardinal B-splines}
\label{sec:bsplines}

The Laplace operator in $\mathbb R^2$ with polar\footnote{Variables $r,\theta$ denote the polar coordinates (radius and angle).} coordinates 
\begin{equation}
\nabla^2 = \dfrac{\partial^2}{\partial r^2} + \frac{1}{r}\dfrac{\partial}{\partial r}+\frac{1}{r^2}\frac{\partial^2}{\partial \theta^2}
\end{equation}
ensures that any function $\gamma(r,\theta)$ defined over a closed region $\mathcal C\subset \mathbb R^2$ with non-empty interior, for which 
\begin{equation}
\nabla^2 \gamma(r,\theta)=0, \:\forall(r,\theta)\in\interior {\mathcal C}
\end{equation}
holds, enjoys two properties of practical importance:
\begin{enumerate}[label=\roman*)]
    \item $\gamma(r,\theta)$ has no points of local minima/maxima (only saddle points), $\forall (r,\theta)\in \interior \mathcal C$;
    \item all local minima/maxima of $\gamma(r,\theta)$ are found on the boundary of its domain:
    \begin{equation}
        \min/\max_{(r,\theta)\in \mathcal C}\gamma(r,\theta)=\min/\max_{(r,\theta)\in \partial\mathcal C}\gamma(r,\theta).
    \end{equation}
\end{enumerate}
Taking 
\begin{equation}
    \mathcal C=\{(r,\theta):\:r\leq 1, |\theta|\leq \pi\},
\end{equation}
as the unit disk, and assuming that the solution over its boundary, $\partial \mathcal C=\{(r,\theta):\: r=1, -\pi\leq\theta\leq \pi\}$, is known (the ``Dirichlet boundary problem''), $\gamma(r=1,\theta)=h(\theta)$, the surface that verifies $\nabla^2\gamma(r,\theta)=0,\:\forall (r,\theta)\in\interior \mathcal C$ is
\begin{equation}
\label{eq:harmonic_fourier}
    \gamma(r,\theta)=A_0+\sum\limits_{n=1}^\infty A_nr^n\cos n\theta + B_nr^n\sin n\theta,
\end{equation}
where $A_n,B_n$ come from the trigonometric Fourier series:
\begin{equation}
\label{eq:h_fourier}
    h(\theta)=A_0+\sum\limits_{n=1}^\infty A_n\cos n\theta + B_n\sin n\theta.
\end{equation}


While \eqref{eq:harmonic_fourier} is computed over the unit disk, any map $\varphi(\cdot)$ that preserves continuity and monotonicity allows to pull it onto an arbitrary convex domain. \cite{conner2003composition} considers an arbitrary polytope given in half-space representation
\begin{equation}
    \mathcal P=\{q\in \mathbb R^2: \beta_i(q)\geq 0, \forall i=1,\ldots,m\}, 
\end{equation}
where $\beta_i(q)=a_i^\top q -b_i$ are linear functions chosen such that $\mathcal P$ is not empty (with shorthand notation $q=\bbm x& y\ebm^\top$). Then, taking 
\begin{equation}
\label{eq:betaq}
\beta(q)=\beta_{\text{max}}^{\frac{1-m}{m}}\prod\limits_{i=1}^m \beta_i(q),
\end{equation}
with 
\begin{equation}
\label{eq:betamax}
\beta_{\text{max}}=\prod\limits_{i=1}^m \beta_i(q_\beta), \quad \text{and}\quad q_\beta=\arg\max\limits_q \prod\limits_{i=1}^m \beta_i(q)
\end{equation}
allows to define the diffeomorphic and monotonous mapping 
\begin{equation}
\label{eq:pulled}
    \varphi(q)=\frac{q}{\|q\|+\beta(q)}.
\end{equation}
Subsequently, $q\in \mathcal P\implies \varphi(q)\in \mathcal C$ and, moreover, $q\in \partial \mathcal P\implies\varphi(q)\in \partial\mathcal C$. Thus, we may consider the harmonic potential surface over the polytope $\mathcal P$, given as 
\begin{equation}
\label{eq:gamma_pulled}
    \gamma_{\mathcal P}(q)=\gamma(\varphi(q)).
\end{equation}

\subsection{Computation of the harmonic function}

Firstly, we introduce a periodicity condition for $h(\theta)$ defined as in \eqref{eq:dirichlet_spline} and characterizing the boundary of $\mathcal C$.
\begin{lemma}
\label{lem:periodicity}
For a given $T \in \mathbb{Z}_{> 0}$, let $\lambda = 2\pi / T$ and $P_{k} = P_{k\pm T}$ for all $k \in \mathbb{Z}$. Then, the function $h(\theta)$ defined in~\eqref{eq:dirichlet_spline}$\,$ satisfies
\begin{equation}
\label{eq:h_periodic}
    h(\theta) = h(\theta + 2\pi), \quad \forall\, \theta \in \mathbb{R}.
\end{equation}
\end{lemma}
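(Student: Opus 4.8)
The plan is to reduce the claimed $2\pi$-periodicity of $h$ to two elementary observations: (a)~shifting the argument by $2\pi$ translates each scaled B-spline atom $\sigma_k$ of \eqref{eq:sigma} by exactly $T$ index units, and (b)~the control points are $T$-periodic by hypothesis, so this index shift becomes invisible after summation.

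First I would record the effect of the shift on a single atom. Since $\lambda = 2\pi/T$ and $T \in \mathbb Z_{>0}$, the argument rescaling gives $(\theta + 2\pi)/\lambda = \theta/\lambda + T$, hence from \eqref{eq:sigma},
\begin{equation*}
\sigma_k(\theta + 2\pi) = M_p\!\left(\frac{\theta}{\lambda} + T - k\right) = M_p\!\left(\frac{\theta}{\lambda} - (k - T)\right) = \sigma_{k-T}(\theta).
\end{equation*}
Substituting into \eqref{eq:dirichlet_spline} and reindexing the sum (replacing $k$ by $k + T$) then gives
\begin{align*}
h(\theta + 2\pi) &= \sum_{k \in \mathbb Z} P_k\, \sigma_{k-T}(\theta) = \sum_{k \in \mathbb Z} P_{k+T}\, \sigma_k(\theta)\\
&= \sum_{k \in \mathbb Z} P_k\, \sigma_k(\theta) = h(\theta),
\end{align*}
where the penultimate equality uses $P_{k+T} = P_k$ for all $k$, which holds by hypothesis. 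This is precisely \eqref{eq:h_periodic}.

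The argument is essentially bookkeeping, so there is no genuine obstacle; the only point worth spelling out is that \eqref{eq:dirichlet_spline} is an infinite series and the reindexing must be legitimate. I would justify it by invoking the local-support property \ref{prop:support}: for each fixed $\theta$ only finitely many atoms $\sigma_k(\theta)$ are nonzero (namely those with index $k$ in a window of length $p+1$ determined by $\theta/\lambda$), so the series is effectively a finite sum and the shift of index is unambiguous. No convergence estimates or continuity arguments are required.
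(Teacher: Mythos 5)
Your proof is correct and follows essentially the same route as the paper's: shift the argument by $2\pi$, use $\lambda = 2\pi/T$ to turn this into an index shift of the atoms, reindex the sum, and absorb the shift via the $T$-periodicity of the control points. Your extra remark that the local-support property \ref{prop:support} makes the series locally finite (so the reindexing is unambiguous) is a small justification the paper omits, but it does not change the argument.
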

\begin{proof}
The following chain of equalities holds:
\begin{align*}
    h(\theta + 2\pi)\overset{\eqref{eq:dirichlet_spline}}{=}&\sum\limits_{k\in \mathbb Z}P_k\sigma_k(\theta+2\pi)\overset{\eqref{eq:sigma}}{=}\sum\limits_{k\in \mathbb Z}P_kM_p\biggl(\frac{\theta+2\pi}{\lambda}-k\biggr) \\ &
    \overset{\lambda = 2\pi / T}{=}\sum\limits_{k\in \mathbb Z}P_kM_p\biggl(\frac{\theta}{\lambda}+T-k\biggr)\\
    \overset{P_k=P_{k+T}}{=}&\sum\limits_{k\in \mathbb Z}P_{k+T}M_p\biggl(\frac{\theta}{\lambda}+T-k-T\biggr)\\
    &=\sum\limits_{k\in \mathbb Z}P_{k+T}M_p\biggl(\frac{\theta}{\lambda}-k\biggr)=h(\theta),
\end{align*}
where, in the last step, a change of variable $k \mapsto k + T$ is performed.  
This confirms~\eqref{eq:h_periodic}, completing the proof.
\end{proof}

The next result gives conditions on the indices of the spline functions \eqref{eq:sigma} which do not vanish over the interval $[0,2\pi]$.
\begin{lemma}
\label{lem:support}
The sequence of indices $k$ whose spline functions $\sigma_k(\theta)$, defined as in \eqref{eq:sigma}, have supports
\begin{enumerate}[label=\roman*)]
    \item fully contained in $[0,2\pi]$ is
    \begin{equation}
    \label{eq:kin}
        \mathcal K_\bullet = \{0,\ldots, T-p-1\};
    \end{equation}
    \item partially contained in $[0,2\pi]$ is
        \begin{equation}
        \label{eq:kpartial}
        \mathcal K_\circ = \{-p,\ldots, -1\}\cup \{T-p,\ldots, T-1\}.
    \end{equation}
\end{enumerate}
\end{lemma}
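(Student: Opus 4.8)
The plan is to pin down, once and for all, the exact support of each scaled spline $\sigma_k$, and then read off the two containment conditions as simple endpoint inequalities in $k$.

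\emph{Step 1 — support of $\sigma_k$.} Combining \eqref{eq:sigma} with the shift identity \eqref{eq:cardinal_splines_c} gives $\sigma_k(\theta)=B_{0,p,\mathbb Z}(\theta/\lambda-k)=M_p(\theta/\lambda-k)$, so by the local-support property \ref{prop:support} (and $M_p>0$ on $(0,p+1)$), the support of $\sigma_k$ is the interval
\[
 S_k:=\Bigl[\tfrac{2\pi k}{T},\ \tfrac{2\pi(k+p+1)}{T}\Bigr),
\]
where we used $\lambda=2\pi/T$. In particular $\sigma_k$ vanishes identically on $[0,2\pi]$ exactly when $\interior S_k\cap[0,2\pi]=\emptyset$.

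\emph{Step 2 — fully contained, part (i).} Since $S_k$ is an interval, $S_k\subseteq[0,2\pi]$ holds iff its left endpoint is nonnegative and its right endpoint is at most $2\pi$, i.e. $\tfrac{2\pi k}{T}\ge 0$ and $\tfrac{2\pi(k+p+1)}{T}\le 2\pi$. These reduce to $0\le k\le T-p-1$, which is precisely \eqref{eq:kin}.

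\emph{Step 3 — partially contained, part (ii).} Here ``partially contained'' means that $\sigma_k$ is not identically zero on $[0,2\pi]$ — equivalently $\interior S_k\cap[0,2\pi]\ne\emptyset$ by Step 1 — while $S_k\not\subseteq[0,2\pi]$. For the nested intervals at hand the first condition amounts to $\tfrac{2\pi k}{T}<2\pi$ and $\tfrac{2\pi(k+p+1)}{T}>0$, i.e. $-p\le k\le T-1$; the negation of the Step 2 condition is $k\le -1$ or $k\ge T-p$. Intersecting $\{-p,\dots,T-1\}$ with $\{k\le -1\}\cup\{k\ge T-p\}$ yields $\{-p,\dots,-1\}\cup\{T-p,\dots,T-1\}=\mathcal K_\circ$, i.e. \eqref{eq:kpartial}. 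The only delicate point is this endpoint bookkeeping in Step 3: it is what separates a genuine partial overlap from the degenerate cases $k=T$ (where $S_k\cap[0,2\pi]$ is the single point $\{2\pi\}$, on which $\sigma_k$ in fact vanishes since $M_p(0)=0$) and $k=-p-1$ (where $S_k\cap[0,2\pi]=\emptyset$), both excluded by requiring the overlap to have nonempty interior; the remaining manipulations are routine arithmetic.
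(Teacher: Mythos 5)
Your proof is correct and follows essentially the same route as the paper's: invoke the local-support property \ref{prop:support} to identify $S_k=[\lambda k,\lambda(k+p+1))$, read off full containment as $0\le k\le T-p-1$, and obtain $\mathcal K_\circ$ as the set difference between the contributing indices $\{-p,\dots,T-1\}$ and $\mathcal K_\bullet$. Your extra bookkeeping on the degenerate endpoints $k=-p-1$ and $k=T$ is a welcome refinement the paper leaves implicit, but it does not change the argument.
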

\begin{proof}
By property~\ref{prop:support} we have 
\begin{equation}
\label{eq:sigma_support}
    \sigma_k(\theta)=0,\ \forall \theta \notin [\lambda k, \lambda(k+p+1)),
\end{equation}
which implies that only splines with indices $k\in\{-p,\ldots, T-1\}$ contribute to the function $h(\theta)$ on the interval $[0,2\pi]$. Among these, only the indices in \eqref{eq:kin} correspond to splines whose supports lie entirely within $[0,2\pi]$ (i.e., there is no $\theta\notin [0,2\pi]$ such that $\sigma_k(\theta)\neq 0$). Subtracting this latter set from the former yields \eqref{eq:kpartial}, which concludes the proof.
\end{proof}
We now have the tools to provide the representation of $h(\theta)$ as a Fourier trigonometric series. 
\begin{proposition}
    For a sequence of control points respecting
    \begin{equation}
    \label{eq:control_points}
        P_k = P_{k\pm T},\:\forall k \in \mathbb Z, \quad P_{\ell}=0,\: \forall \ell \in \mathcal K_\circ,
    \end{equation}
    with $\mathcal K_\circ$ as in \eqref{eq:kpartial} and $T\in \mathbb Z_{>p+1}$,     the Fourier coefficients of the function $h(\theta)$ are given by relations
\begin{subequations}
\label{eq:fourier_coefficients}
    \begin{align}
    A_0&=\frac{\lambda}{2\pi}\sum\limits_{k\in \mathcal K_\bullet}P_k,\\
    A_n&=\frac{\lambda}{2\pi(j\lambda n)^{p+1}}\sum\limits_{k\in \mathcal K_\bullet} P_k\sum\limits_{\ell=0}^{p+1}(-1)^\ell{p+1\choose \ell}E_{n,\ell,k},\\
    B_n&=\frac{\lambda}{2\pi(j\lambda n)^{p+1}}\sum\limits_{k\in \mathcal K_\bullet} P_k\sum\limits_{\ell=0}^{p+1}(-1)^\ell{p+1\choose \ell}F_{n,\ell,k}.
    \end{align}
\end{subequations}    
with shorthand notations
\begin{subequations}
\begin{align}
    \nonumber E_{n,\ell,k}=&\bigl[1+(-1)^{p+1}\bigr]\cos \lambda n(\ell+k)\\
    &-j\bigl[1-(-1)^{p+1}\bigr]\sin \lambda n(\ell+k),\\
    \nonumber F_{n,\ell,k}=&j\bigl[1-(-1)^{p+1}\bigr]\cos \lambda n(\ell+k)\\
    &+\bigl[1+(-1)^{p+1}\bigr]\sin \lambda n(\ell+k).
\end{align}
\end{subequations}
\end{proposition}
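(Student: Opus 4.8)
The plan is to bypass the real trigonometric integrals and instead compute the complex exponential coefficients $c_n=\frac{1}{2\pi}\int_{0}^{2\pi}h(\theta)e^{-jn\theta}\,d\theta$, recovering the stated quantities afterwards through $A_0=c_0$, $A_n=c_n+c_{-n}$ and $B_n=j(c_n-c_{-n})$. The first step is a reduction to a finite sum. By Lemma~\ref{lem:support}, on the window $[0,2\pi]$ the only active splines $\sigma_k$ are those with $k\in\{-p,\dots,T-1\}$; since the hypothesis \eqref{eq:control_points} forces $P_\ell=0$ for $\ell\in\mathcal K_\circ$, we get $h(\theta)=\sum_{k\in\mathcal K_\bullet}P_k\sigma_k(\theta)$ on $[0,2\pi]$, and by \eqref{eq:kin} each such $\sigma_k$ has support entirely contained in $[0,2\pi]$. (Lemma~\ref{lem:periodicity} guarantees that this window indeed carries one full period.) Hence the integral over $[0,2\pi]$ may be extended to all of $\mathbb R$ term by term, turning each summand into a bona fide Fourier transform: $c_n=\frac{1}{2\pi}\sum_{k\in\mathcal K_\bullet}P_k\,\mathcal F\{\sigma_k\}(n)$, with all interchanges of finite sum and integral trivially justified.

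Next I would evaluate $\mathcal F\{\sigma_k\}(n)$. Writing $\sigma_k(\theta)=M_p(\theta/\lambda-k)$ as a scaling by $\lambda$ of a unit shift of $M_p$, and combining the time-scaling and time-shift rules of the Fourier transform with property~\ref{prop:fourier}, one obtains
\[
\mathcal F\{\sigma_k\}(\omega)=\lambda\,e^{-j\lambda k\omega}\Bigl(\tfrac{1-e^{-j\lambda\omega}}{j\lambda\omega}\Bigr)^{p+1},
\]
so that $\mathcal F\{\sigma_k\}(n)=\lambda e^{-j\lambda k n}\bigl(\tfrac{1-e^{-j\lambda n}}{j\lambda n}\bigr)^{p+1}$ for $n\ge1$. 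Expanding $(1-e^{-j\lambda n})^{p+1}$ by the binomial theorem yields the factor $\sum_{\ell=0}^{p+1}(-1)^{\ell}\binom{p+1}{\ell}e^{-j\lambda n(\ell+k)}$, which already matches the skeleton of \eqref{eq:fourier_coefficients}. The case $n=0$ is handled separately by the limit $\lim_{\omega\to0}\bigl(\tfrac{1-e^{-j\omega}}{j\omega}\bigr)^{p+1}=1$, equivalently $\int_{\mathbb R}M_p=1$ and thus $\int_{\mathbb R}\sigma_k=\lambda$, giving $A_0=c_0=\tfrac{\lambda}{2\pi}\sum_{k\in\mathcal K_\bullet}P_k$.

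Finally I would assemble $A_n=c_n+c_{-n}$ and $B_n=j(c_n-c_{-n})$. Using $(-j\lambda n)^{-(p+1)}=(-1)^{p+1}(j\lambda n)^{-(p+1)}$, the $c_n$ and $c_{-n}$ contributions combine, for each pair $(\ell,k)$, into the bracket $e^{-jx}\pm(-1)^{p+1}e^{jx}$ with $x=\lambda n(\ell+k)$; rewriting these two expressions via Euler's formula produces precisely $E_{n,\ell,k}$ and $F_{n,\ell,k}$ (with the extra factor $j$ absorbed in the definition of $F$), and collecting terms gives \eqref{eq:fourier_coefficients}. I expect no genuine analytic obstacle here — the reduction to a finite sum of compactly supported B-splines makes everything rigorous — so the only real care needed is bookkeeping: keeping the sign convention of the Fourier transform consistent with property~\ref{prop:fourier} and correctly propagating the factor $(-1)^{p+1}$ through the real/imaginary recombination of $c_n$ and $c_{-n}$.
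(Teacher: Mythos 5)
Your proposal is correct and takes essentially the same route as the paper's proof: both reduce $h(\theta)$ on $[0,2\pi]$ to the finite sum over $\mathcal K_\bullet$ via Lemma~\ref{lem:support}, evaluate the Fourier transform of $\sigma_k$ (property~\ref{prop:fourier} plus scaling/shift) at integer frequencies to obtain the exponential coefficients, expand $\bigl(1-e^{\mp j\lambda n}\bigr)^{p+1}$ by the binomial theorem, recombine into $A_n,B_n$ via $A_n=c_n+c_{-n}$, $B_n=j(c_n-c_{-n})$, and handle $A_0$ by the $\omega\to 0$ limit. Your explicit justification of the series-coefficient/transform link through the compact supports of the $\sigma_k$, and your early extraction of $(-1)^{p+1}$ from $(-j\lambda n)^{p+1}$, are merely cleaner bookkeeping of the same computation.
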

\begin{proof}
    We start by computing $\Sigma_k(\omega)$, the Fourier transform of $\sigma_k(\theta)$, from \eqref{eq:sigma}, with the help of \ref{prop:fourier}:
\begin{equation}
\label{eq:sigma_fourier}
\begin{split}
    \Sigma_k(\omega)&=\mathcal F\bigl\{\sigma_k(\theta)\bigr\}=\mathcal F\biggl\{ M_p\left(\frac{\theta}{\lambda}-k\right)\biggr\}\\
    &=\left(\frac{1-e^{-j\lambda \omega}}{j\lambda\omega}\right)^{p+1}\cdot |\lambda| e^{-j\lambda\omega k}.
\end{split}
\end{equation}
Under \eqref{eq:control_points}, Lem.~\ref{lem:periodicity} ensures that $h(\theta)$ is $2\pi$-periodic and Lem.~\ref{lem:support} allows to state that 
\begin{equation}
\label{eq:dirichlet_spline_restricted}
    g(\theta)=\sum\limits_{k\in \mathcal K_\bullet} P_k\sigma_k(\theta)
\end{equation}
is the restriction of $h(\theta)$ on the interval $[0,2\pi]$, i.e., that $h(\theta)=\sum_{k\in \mathbb Z} g(\theta+2\pi k)$. Via \eqref{eq:sigma_fourier}, its Fourier transform is given by
\begin{equation}
\label{eq:h_fourier}
    G(\omega)=\sum\limits_{k\in \mathcal K_\bullet}P_k\left(\frac{1-e^{-j\lambda \omega}}{j\lambda\omega}\right)^{p+1}\cdot |\lambda| e^{-j\lambda\omega k}.
\end{equation}
Next, we make use of two results from the folklore of Fourier analysis linking\footnote{We work under the assumption that $h(\theta)$ is $2\pi$-periodic, hence is fundamental pulsation is $\omega_0=1$.} $h(\theta)$ and $G(\omega)$:
\begin{enumerate}[label=\roman*)]
    \item the coefficients of a Fourier series, given in exponential ($D_n$) depend on those in  trigonometric form ($A_n,B_n$) by
    \begin{equation}
    \label{eq:dn_equiv}
        D_0=A_0,\quad D_{\pm n}=\frac{A_n\mp jB_n}{2},\: \forall n\geq 1. 
    \end{equation}
    \item $G(\omega)$ verifies
    \begin{equation}
    \label{eq:dn_from_ft}
        D_n=\frac{\omega_0}{2\pi}G(\omega)\biggr|_{\omega=n\omega_0}\overset{\omega_0=1}{=}\frac{1}{2\pi}G(\omega)\biggr|_{\omega=n}.
    \end{equation}
\end{enumerate}
Noting that $\lambda > 0$ and substituting \eqref{eq:h_fourier} into \eqref{eq:dn_from_ft}, we obtain
\begin{equation}
    D_n=\frac{1}{2\pi}\sum\limits_{k\in \mathcal K_\bullet}\left[P_k\left(\frac{1-e^{-j\lambda n}}{j\lambda n}\right)^{p+1}\cdot \lambda e^{-j\lambda n k}\right],
\end{equation}
which, using \eqref{eq:dn_equiv}, yields \eqref{eq:fourier_coefficients} after a straightforward though tedious manipulation, in which the binomial theorem is applied to expand $\bigl(1 - e^{\mp j \lambda n}\bigr)^{p+1}$. 

We recall the Taylor series expansion: 
\begin{equation*}
    e^{-x}=1-x+\frac{x^2}{2!}-\frac{x^3}{3!}+\frac{x^4}{4!}-...
\end{equation*}
from which we obtain the following:
\begin{align*}
    e^{-j\lambda n}=1-j\lambda n -\frac{\lambda^2n^2}{2}+\frac{j\lambda^3n^3 
    }{6}+\frac{\lambda^4n^4}{24}+...\\
    1-e^{-j\lambda n}=1-1+j\lambda n +\frac{\lambda^2n^2}{2}-\frac{j\lambda^3n^3}{6}-\frac{\lambda^4n^4}{24}-...\\
    \frac{1-e^{-j\lambda n}}{j\lambda n}=1+\frac{\lambda^2n^2}{2j\lambda n}-\frac{j\lambda^3n^3}{6j\lambda n}-\frac{\lambda^4n^4}{24j\lambda n}-...\\ 
    = 1+\underbrace{\frac{\lambda n}{2j}-\frac{\lambda^2n^2}{6}-\frac{\lambda^3n^2}{24j}-...}_{n\rightarrow 0} = 1.
\end{align*}
Therefore, we have that: 
\begin{equation*}
    \biggl(\frac{1-e^{-j\lambda n}}{j\lambda n}\biggr)^{p+1}\biggl|_{n\rightarrow0}=1
\end{equation*}
Using this result, we obtain:
\begin{equation*}
    A_0=D_0\underset{n\rightarrow 0}{=}\frac{1}{2\pi}\sum\limits_{k\in \mathcal K_\bullet}\biggl[P_k\cdot\underbrace{\biggl(\frac{1-e^{-j\lambda n}}{j\lambda n}\biggr)^{p+1}}_{=1}\cdot\lambda \underbrace{e^{-j\lambda nk}}_{=1}\biggr], 
\end{equation*}
which means that:
\begin{equation*}
    A_0=\frac{\lambda}{2\pi}\sum\limits_{k\in \mathcal K_\bullet}P_k
\end{equation*}

Next, we make use of \eqref{eq:dn_equiv} to obtain coefficients $A_n$ and $B_n$.

\begin{equation*}
    \begin{cases}
        D_n=\frac{
        A_n-jB_n}{2} \\
        D_{-n}=\frac{A_n+jB_n}{2}
    \end{cases} \overset{+}{\implies} 
        A_n=D_n+D_{-n}
\end{equation*}
\begin{align*}
A_n=\frac{1}{2\pi}\sum\limits_{k\in \mathcal K_\bullet}\biggl[P_k\cdot\biggl(\frac{1-e^{-j\lambda n}}{j\lambda n}\biggr)^{p+1}\cdot \lambda e^{-j\lambda nk}\biggr]  \\ + \frac{1}{2\pi}\sum\limits_{k\in \mathcal K_\bullet}\biggl[P_k\cdot\biggl(\frac{-1+e^{j\lambda n}}{j\lambda n}\biggr)^{p+1}\cdot \lambda e^{j\lambda nk}\biggr]  \\ 
= \frac{1}{2\pi}\sum\limits_{k\in \mathcal K_\bullet}\biggl[P_k\frac{1}{(j\lambda n)^{p+1}}\sum\limits_{\ell=0}^{p+1}(-1)^\ell{p+1\choose \ell}e^{-j\lambda n \ell}\lambda e^{-j\lambda nk}\biggr]  \\ 
+ \frac{1}{2\pi}\sum\limits_{k\in \mathcal K_\bullet}\biggl[P_k\frac{1}{(j\lambda n)^{p+1}}\sum\limits_{\ell=0}^{p+1}(-1)^\ell{p+1\choose \ell}e^{j\lambda n \ell}\lambda e^{j\lambda nk}\biggr]  \\
=\frac{\lambda}{2\pi (j\lambda n)^{p+1}}\sum\limits_{k\in \mathcal K_\bullet}\biggl[P_k\sum\limits_{\ell=0}^{p+1}(-1)^\ell{p+1\choose \ell}e^{-j\lambda n(l+k)}\biggr]  \\
+ \frac{\lambda}{2\pi (j\lambda n)^{p+1}}\sum\limits_{k\in \mathcal K_\bullet}\biggl[P_k\sum\limits_{\ell=0}^{p+1}(-1)^\ell{p+1\choose \ell}e^{j\lambda n(l+k)}\biggr]  \\
= \frac{\lambda}{2\pi (j\lambda n)^{p+1}}\sum\limits_{k\in \mathcal K_\bullet}\biggl[P_k\sum\limits_{\ell=0}^{p+1}(-1)^\ell{p+1\choose \ell} \\ \biggl(\cos\lambda n(l+k)-j\sin\lambda n(l+k)\biggr)\biggr]  \\ + \frac{\lambda}{2\pi (j\lambda n)^{p+1}}\sum\limits_{k\in \mathcal K_\bullet}\biggl[P_k\sum\limits_{\ell=0}^{p+1}(-1)^\ell{p+1\choose \ell} \\ \biggl(\cos\lambda n(l+k)+j\sin\lambda n(l+k)\biggr)\biggr]  \\
= \frac{\lambda}{2\pi (j\lambda n)^{p+1}}\sum\limits_{k\in \mathcal K_\bullet}\biggl[P_k\sum\limits_{\ell=0}^{p+1}(-1)^\ell{p+1\choose \ell}\cos\lambda n(l+k) \\-(-1)^\ell{p+1\choose \ell}j\sin\lambda n(l+k)\biggr] \\
+ \frac{\lambda}{2\pi (j\lambda n)^{p+1}}\sum\limits_{k\in \mathcal K_\bullet}\biggl[P_k\sum\limits_{\ell=0}^{p+1}(-1)^{p+1-\ell}{p+1\choose \ell}\cos\lambda n(l+k) \\+(-1)^{p+1-\ell}{p+1\choose \ell}j\sin\lambda n(l+k)\biggr] 
\end{align*}
\begin{align*}
    = \frac{\lambda}{2\pi (j\lambda n)^{p+1}}\sum\limits_{k\in \mathcal K_\bullet}\biggl[P_k\sum\limits_{\ell=0}^{p+1}(-1)^\ell{p+1\choose \ell}\biggl[\cos\lambda n(l+k)  \\ 
    +(-1)^{p+1}\cos\lambda n(l+k)-j\sin\lambda n(l+k) \\+(-1)^{p+1}j\sin\lambda n(l+k)\biggr]\biggr]  \\
    = \frac{\lambda}{2\pi (j\lambda n)^{p+1}}\sum\limits_{k\in \mathcal K_\bullet}\biggl[P_k\sum\limits_{\ell=0}^{p+1}(-1)^\ell{p+1\choose \ell}\biggl[\bigl[1 \\+(-1)^{p+1}\bigr]\cos\lambda n(l+k)-j\bigl[1-(-1)^{p+1}\bigr]\sin\lambda n(l+k)\biggr]\biggr].
\end{align*}
Therefore, 
\begin{align*}
    A_n=\frac{\lambda}{2\pi (j\lambda n)^{p+1}}\sum\limits_{k\in \mathcal K_\bullet}P_k\sum\limits_{\ell=0}^{p+1}(-1)^\ell{p+1\choose \ell}E_{n,\ell,k},
\end{align*}
where 
\begin{align*}
    E_{n,\ell,k}=\bigl[1+(-1)^{p+1}\bigr]cos\lambda n(l+k) \\ 
    -j\bigl[1-(-1)^{p+1}\bigr]sin\lambda n(l+k).
\end{align*}
Next, for the $B_n$ coefficient we have that

\begin{align*}
    B_n=\frac{A_n-2D_n}{j} = -j(A_n-2Dn)
\end{align*}
Consequently,
\begin{align*}
    B_n=-j\biggl(\frac{\lambda}{2\pi (j\lambda n)^{p+1}}\sum\limits_{k\in \mathcal K_\bullet}P_k\sum\limits_{\ell=0}^{p+1}(-1)^\ell{p+1\choose \ell}\\\biggl[\bigl[1+(-1)^{p+1}\bigr]\cos\lambda n(l+k) \\ 
    -j\bigl[1-(-1)^{p+1}\bigr]\sin\lambda n(l+k)\biggr] \\ -\frac{2}{2\pi}\sum\limits_{k\in \mathcal K_\bullet}\left[P_k\left(\frac{1-e^{-j\lambda n}}{j\lambda n}\right)^{p+1}\cdot \lambda e^{-j\lambda n k}\right]\biggr) \\
    = -j\biggl(\frac{\lambda}{2\pi (j\lambda n)^{p+1}}\sum\limits_{k\in \mathcal K_\bullet}P_k\sum\limits_{\ell=0}^{p+1}(-1)^\ell{p+1\choose \ell}\\\biggl[\bigl[1+(-1)^{p+1}\bigr]\cos\lambda n(l+k) \\ 
    -j\bigl[1-(-1)^{p+1}\bigr]\sin\lambda n(l+k)\biggr] \\
    -\frac{\lambda}{\pi (j\lambda n)^{p+1}}\sum\limits_{k\in \mathcal K_\bullet}P_k\sum\limits_{\ell=0}^{p+1}(-1)^\ell{p+1\choose \ell}e^{-j\lambda n\ell}e^{-j\lambda nk}\biggr) \\
    = -j\biggl(\frac{\lambda}{2\pi (j\lambda n)^{p+1}}\sum\limits_{k\in \mathcal K_\bullet}P_k\sum\limits_{\ell=0}^{p+1}(-1)^\ell{p+1\choose \ell}\\\biggl[\bigl[1+(-1)^{p+1}\bigr]\cos\lambda n(l+k) \\ 
    -j\bigl[1-(-1)^{p+1}\bigr]\sin\lambda n(l+k)\biggr] \\
    -\frac{\lambda}{\pi (j\lambda n)^{p+1}}\sum\limits_{k\in \mathcal K_\bullet}P_k\sum\limits_{\ell=0}^{p+1}(-1)^\ell{p+1\choose \ell}e^{-j\lambda n(\ell+k)}\biggr)
\end{align*}
\begin{align*}
    = -j\biggl(\frac{\lambda}{2\pi (j\lambda n)^{p+1}}\sum\limits_{k\in \mathcal K_\bullet}P_k\sum\limits_{\ell=0}^{p+1}(-1)^\ell{p+1\choose \ell}\\\biggl[\bigl[1+(-1)^{p+1}\bigr]\cos\lambda n(l+k) \\ 
    -j\bigl[1-(-1)^{p+1}\bigr]\sin\lambda n(l+k)\biggr] \\
    -\frac{\lambda}{\pi (j\lambda n)^{p+1}}\sum\limits_{k\in \mathcal K_\bullet}P_k\sum\limits_{\ell=0}^{p+1}(-1)^\ell{p+1\choose \ell}\\\bigl[\cos\lambda n(l+k)-j\sin\lambda n(l+k)\bigr]\biggr) \\
    =\biggl(\frac{\lambda}{2\pi (j\lambda n)^{p+1}}\sum\limits_{k\in \mathcal K_\bullet}P_k\sum\limits_{\ell=0}^{p+1}(-1)^\ell{p+1\choose \ell} \\
    \cdot(-j)\bigl[[1+(-1)^{p+1}]\cos\lambda n(l+k) \\
    -j[1-(-1)^{p+1}\sin\lambda n(l+k)]\bigr] \\
    + \frac{\lambda}{2\pi (j\lambda n)^{p+1}}\sum\limits_{k\in \mathcal K_\bullet}P_k\sum\limits_{\ell=0}^{p+1}(-1)^\ell{p+1\choose \ell} \\
    \cdot (2j)[\cos\lambda n(l+k)-j\sin\lambda n(l+k)]\biggr) \\
    = \frac{\lambda}{2\pi (j\lambda n)^{p+1}}\sum\limits_{k\in \mathcal K_\bullet}P_k\sum\limits_{\ell=0}^{p+1}(-1)^\ell{p+1\choose \ell} \\
    \biggl[-j\cos\lambda n(l+k)[1+(-1)^{p+1}-2] \\
    -[1-(-1)^{p+1}-2]\sin\lambda n(l+k)]\biggr].
\end{align*}
Therefore,
\begin{align*}
    B_n=\frac{\lambda}{2\pi (j\lambda n)^{p+1}}\sum\limits_{k\in \mathcal K_\bullet}P_k\sum\limits_{\ell=0}^{p+1}(-1)^\ell{p+1\choose \ell}F_{n,\ell,k},
\end{align*}
where
\begin{align*}
    F_{n,\ell,k}=j[1-(-1)^{p+1}]\cos\lambda n(l+k)\\
    + [1+(-1)^{p+1}]\sin\lambda n(l+k).
\end{align*}
This concludes the proof.
\end{proof}
For the next result we recall the definition of the polylogarithm $\li_{p+1}(z)$ of order $p+1$,
\begin{equation}
\label{eq:polylog}
 \li_{p+1}(z) = \sum_{n=1}^\infty \frac{z^n}{n^{p+1}}.
\end{equation}
We now have the tools to express $\gamma(r,\theta)$ as stated next.
\begin{proposition}
\label{prop:gamma_cardinal_spline}
With Fourier coefficients \eqref{eq:fourier_coefficients}, \eqref{eq:harmonic_fourier} becomes
    \begin{multline}
\label{eq:harmonic_potential_bspline}
    \gamma(r,\theta)=\frac{\lambda}{2\pi}\sum\limits_{k\in \mathcal K_\bullet}P_k\biggl[1+\frac{1}{(j\lambda)^{p+1}} \sum\limits_{\ell=0}^{p+1}(-1)^\ell{p+1\choose \ell}   \\
    \biggl((-1)^{p+1}\mkern-4mu\li_{p+1}\bigl(re^{j\left[\lambda(\ell+k)-\theta\right]}\bigr)
    \mkern-4mu+\mkern-4mu\li_{p+1}\bigl(re^{-j\left[\lambda(\ell+k)-\theta\right]}\bigr)\biggr)\biggr],
\end{multline}
\end{proposition}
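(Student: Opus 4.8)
The plan is to bypass the trigonometric coefficients $A_n,B_n$ and to work directly with the exponential Fourier data $D_n$ obtained inside the proof of the previous proposition, since the harmonic extension is cleanest in that form. Using $\cos n\theta=\tfrac12(e^{jn\theta}+e^{-jn\theta})$, $\sin n\theta=\tfrac1{2j}(e^{jn\theta}-e^{-jn\theta})$ together with \eqref{eq:dn_equiv}, I first rewrite \eqref{eq:harmonic_fourier} as
\[
\gamma(r,\theta)=D_0+\sum_{n=1}^{\infty}\Bigl(D_n\,r^n e^{jn\theta}+D_{-n}\,r^n e^{-jn\theta}\Bigr),
\]
so that forming the harmonic extension amounts to attaching the factor $r^{|n|}$ to each exponential mode. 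From the previous proposition, for $n\geq 1$,
\[
D_n=\frac{\lambda}{2\pi(j\lambda n)^{p+1}}\sum_{k\in\mathcal K_\bullet}P_k\sum_{\ell=0}^{p+1}(-1)^\ell\binom{p+1}{\ell}e^{-j\lambda n(\ell+k)},
\]
and since the $P_k$ are real, $D_{-n}=\overline{D_n}$; using $\overline{(j\lambda n)^{p+1}}=(-1)^{p+1}(j\lambda n)^{p+1}$ this puts $D_{-n}$ in the same form as $D_n$ up to the global factor $(-1)^{p+1}$ and the replacement $e^{-j\lambda n(\ell+k)}\mapsto e^{+j\lambda n(\ell+k)}$.

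Next I substitute these two expressions into the display above and swap the finite sums over $k\in\mathcal K_\bullet$ and $\ell\in\{0,\dots,p+1\}$ with the series over $n$. Pulling the constant $1/(j\lambda)^{p+1}$ out of $1/(j\lambda n)^{p+1}$, the remaining $n$-series are of the form $\sum_{n\geq1}z^n/n^{p+1}$ with $z=re^{j(\theta-\lambda(\ell+k))}=re^{-j(\lambda(\ell+k)-\theta)}$ coming from the $D_n$ part and $z=re^{j(\lambda(\ell+k)-\theta)}$ coming from the $D_{-n}$ part; by the definition \eqref{eq:polylog} these sums are exactly $\li_{p+1}\bigl(re^{-j(\lambda(\ell+k)-\theta)}\bigr)$ and $\li_{p+1}\bigl(re^{j(\lambda(\ell+k)-\theta)}\bigr)$, the latter carrying the extra sign $(-1)^{p+1}$ inherited from $D_{-n}$. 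Adding $D_0=A_0=\tfrac{\lambda}{2\pi}\sum_{k\in\mathcal K_\bullet}P_k$ and collecting terms reproduces \eqref{eq:harmonic_potential_bspline} verbatim.

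The only genuine point requiring care is the legitimacy of interchanging the order of summation, i.e. absolute convergence of the resulting double series. For $r<1$ the modes decay geometrically, so Fubini applies at once; for $r=1$ one uses $p\geq1$, whence $\sum_{n\geq1}n^{-(p+1)}<\infty$ and each polylogarithm is still absolutely convergent on $|z|=1$. Everything else is bookkeeping: correctly propagating the sign $(-j)^{p+1}=(-1)^{p+1}j^{p+1}$ and matching the phase $\lambda(\ell+k)-\theta$ in the two polylogarithm arguments. I would therefore structure the write-up as: (i) the exponential rewriting of \eqref{eq:harmonic_fourier}; (ii) substitution of $D_{\pm n}$; (iii) the convergence remark justifying the interchange; (iv) identification of the $n$-series with $\li_{p+1}$ via \eqref{eq:polylog}; and (v) reassembly, including the $D_0$ term, into \eqref{eq:harmonic_potential_bspline}.
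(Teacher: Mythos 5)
Your proof is correct, and it reaches \eqref{eq:harmonic_potential_bspline} by a genuinely cleaner bookkeeping route than the paper's. The paper substitutes the trigonometric coefficients $A_n,B_n$ of \eqref{eq:fourier_coefficients} into \eqref{eq:harmonic_fourier}, then recombines $E_{n,\ell,k}\cos n\theta+F_{n,\ell,k}\sin n\theta$ into a single quantity $L_{n,\ell,k,\theta}$ via angle-addition identities, and only afterwards converts $r^n\cos n\phi$ and $r^n\sin n\phi$ back to exponentials to recognize the polylogarithms; in effect it undoes, inside this proof, the exponential-to-trigonometric split performed in the previous proposition. You short-circuit that round trip by staying in exponential form: writing $\gamma(r,\theta)=D_0+\sum_{n\geq1}r^n\bigl(D_ne^{jn\theta}+D_{-n}e^{-jn\theta}\bigr)$, using $D_{-n}=\overline{D_n}$ with $\overline{(j\lambda n)^{p+1}}=(-1)^{p+1}(j\lambda n)^{p+1}$, and reading off the two polylogarithms directly from \eqref{eq:polylog}. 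The signs and phases check out and the end result is identical. What your version buys is brevity and a transparent origin for the factor $(-1)^{p+1}$ multiplying $\li_{p+1}\bigl(re^{j[\lambda(\ell+k)-\theta]}\bigr)$ (it is simply the conjugation of $(j\lambda n)^{p+1}$); what the paper's version buys is that it starts from the coefficients exactly as stated in \eqref{eq:fourier_coefficients}, whereas you reach back to the intermediate quantity $D_n$ internal to the previous proof (harmless, since \eqref{eq:dn_equiv} is invertible). Your explicit justification of the interchange of summation — geometric decay for $r<1$, and absolute convergence on $r=1$ because $p+1\geq2$ — is a point the paper leaves implicit, and is a welcome addition.
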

\begin{proof}
Replacing \eqref{eq:fourier_coefficients} into \eqref{eq:harmonic_fourier}, we arrive at
\begin{multline}
\label{eq:harmonic_potential_bspline1}
    \gamma(r,\theta)=A_0+\sum\limits_{n\geq 1} r^n\left(A_n\cos n\theta + B_n\sin n\theta\right)\\
    =\frac{\lambda}{2\pi}\sum\limits_{k\in \mathcal K_\bullet}P_k +\sum\limits_{n\geq 1} r^n\biggl(\frac{\lambda}{2\pi(j\lambda n)^{p+1}}\sum\limits_{k\in \mathcal K_\bullet}P_k \\ \sum\limits_{\ell=0}^{p+1}(-1)^\ell{p+1\choose \ell}E_{n,\ell,k}\cos n\theta \\ 
    +\frac{\lambda}{2\pi(j\lambda n)^{p+1}}\sum\limits_{k\in \mathcal K_\bullet}P_k\sum\limits_{\ell=0}^{p+1}(-1)^\ell{p+1\choose \ell}F_{n,\ell,k}\sin n\theta\biggr) \\
    =\frac{\lambda}{2\pi}\sum\limits_{k\in \mathcal K_\bullet}P_k\sum\limits_{n\geq 1} r^n\biggl[\frac{\lambda}{2\pi(j\lambda n)^{p+1}}\sum\limits_{k\in \mathcal K_\bullet}P_k\sum\limits_{\ell=0}^{p+1}(-1)^\ell{p+1\choose \ell}\\
    \biggl(E_{n,\ell,k}\cos n\theta+F_{n,\ell,k}\sin n\theta\biggr)\biggr]\\
    =\frac{\lambda}{2\pi}\sum\limits_{k\in \mathcal K_\bullet}P_k\biggl[1+\frac{1}{(j\lambda)^{p+1}}    \sum\limits_{n\geq 1}\frac{r^n}{n^{p+1}}\\
    \sum\limits_{\ell=0}^{p+1}(-1)^\ell{p+1\choose \ell}L_{n,\ell,k, \theta}\biggr],
\end{multline}
with shorthand notation $L_{n,\ell,k, \theta}$ given by
\begin{multline}
\label{eq:L_notation}
    L_{n,\ell,k, \theta}=E_{n,\ell,k}\cos n\theta +F_{n,\ell,k}\sin n\theta \\
    =\cos n\theta\biggl[\bigl[1+(-1)^{p+1}\bigr]\cos \lambda n(l+k)\\
    -j\bigl[1-(-1)^{p+1}\sin\lambda n(l+k)\bigr]\biggr] \\
    +\sin n\theta\biggl[j\bigl[1-(-1)^{p+1}\bigr]\cos\lambda n(l+k) \\
    +\bigl[1+(-1)^{p+1}\bigr]\sin\lambda n(l+k)\biggr] \\
    =\cos n\theta\cos\lambda n(l+k)\bigl[1+(-1)^{p+1}\bigr] \\
    -j\cos n\theta\sin\lambda n(l+k)\bigl[1-(-1)^{p+1}\bigr] \\
    +j\sin n\theta\cos\lambda n(l+k)\bigl[1-(-1)^{p+1}\bigr] \\
    +\sin n\theta\sin\lambda n(l+k)\bigl[1+(-1)^{p+1}\bigr] \\
    =\bigl[1+(-1)^{p+1}\bigr]\cos n\bigl[\lambda(\ell+k)-\theta\bigr]\\
    -j\bigl[1-(-1)^{p+1}\bigr]\sin n\bigl[\lambda(\ell+k)-\theta\bigr].
\end{multline}
Recalling that $r^n\cos n\phi$ can be written as $[(re^{j\phi})^n+(re^{-j\phi})^n]/2$ and $r^n\sin n\phi$ as $[(re^{j\phi})^n-(re^{-j\phi})^n]/2j$ for $\phi = \lambda(\ell + k) - \theta$, we use \eqref{eq:polylog} and \eqref{eq:L_notation} to obtain
\begin{multline*}
\label{eq:harmonic_potential_bspline2}
    \sum\limits_{n\geq 1}\frac{r^n}{n^{p+1}}L_{n,\ell,k, \theta}=\\
    \sum\limits_{n\geq 1}\frac{r^n}{n^{p+1}}\bigl[1+(-1)^{p+1}\bigr]\cos n[\lambda(\ell+k)-\theta]\\
    -\sum\limits_{n\geq 1}\frac{r^n}{n^{p+1}}j\bigl[1-(-1)^{p+1}\bigr]\sin n[\lambda(\ell+k)-\theta]
\end{multline*}
\begin{multline}
     =\sum\limits_{n\geq 1}\frac{r^n}{n^{p+1}}\cos n[\lambda(\ell+k)-\theta] \\
    +\sum\limits_{n\geq 1}\frac{r^n}{n^{p+1}}(-1)^{p+1}\cos n[\lambda(\ell+k)-\theta]\\
    -\sum\limits_{n\geq 1}j\frac{r^n}{n^{p+1}}\sin n[\lambda(\ell+k)-\theta] \\ 
    +\sum\limits_{n\geq 1}j\frac{r^n}{n^{p+1}}(-1)^{p+1}\sin n[\lambda(\ell+k)-\theta]\\
    =\sum\limits_{n\geq 1}\frac{r^n}{n^{p+1}}\bigl[\cos n[\lambda(\ell+k)-\theta]-j\sin n[\lambda(\ell+k)-\theta]\bigr]\\
    +(-1)^{p+1}\sum\limits_{n\geq 1}\frac{r^n}{n^{p+1}}\bigl[\cos n[\lambda(\ell+k)-\theta]+j\sin n[\lambda(\ell+k)-\theta]\bigr] \\
    =\sum\limits_{n\geq 1}\frac{r^n}{n^{p+1}}e^{-jn[\lambda(\ell+k)-\theta]} + (-1)^{p+1}\sum\limits_{n\geq 1}\frac{r^n}{n^{p+1}}e^{jn[\lambda(l+k)-\theta]}\\
    =\sum\limits_{n\geq 1}\frac{\bigl(re^{-j[\lambda(\ell+k)-\theta]}\bigr)^n}{n^{p+1}}+(-1)^{p+1}\sum\limits_{n\geq 1}\frac{\bigl(re^{j[\lambda(\ell+k)-\theta]}\bigr)^n}{n^{p+1}} \\
    =(-1)^{p+1}\mkern-4mu\li_{p+1}\bigl(re^{j\left[\lambda(\ell+k)-\theta\right]}\bigr)
    \mkern-4mu+\mkern-4mu\li_{p+1}\bigl(re^{-j\left[\lambda(\ell+k)-\theta\right]}\bigr).
\end{multline}
Introducing \eqref{eq:harmonic_potential_bspline2} in \eqref{eq:harmonic_potential_bspline1} leads to \eqref{eq:harmonic_potential_bspline}, thus concluding the proof.
\end{proof}
\begin{remark}
Unfortunately, for any $p > 0$, \eqref{eq:polylog} defines a transcendental function, i.e., it cannot be expressed in terms of elementary functions. Thus, for practical implementations we resort to approximations of \eqref{eq:polylog}. The simplest approach is to truncate the defining series, but more refined methods based on approximations of the zeta function and alternative series representations are also available~\cite[eq.~(27)]{bhagat2003evaluation}. \eor
\end{remark}

\subsection{Choosing the control points}





To construct the harmonic potential surface in \eqref{eq:harmonic_potential_bspline}, we must appropriately choose the control points $P_k$ defining the boundary condition \eqref{eq:dirichlet_spline}. The following lemma assists in this.
\begin{lemma}
\label{lem:pk}
For scalars $\theta_0,\theta_1 \in (0, 2\pi)$ with $\theta_0 < \theta_1$, take
\begin{equation}
    \label{eq:pk}
    P_k = \begin{cases}
          -1, &  k\in \mathcal K_\theta,\\
          \hphantom{-}0, & \text{otherwise}. 
        \end{cases}  
\end{equation}
where 
\begin{equation}
\label{eq:index_theta}
    \mathcal K_{\theta} = \biggl\{\left\lceil \frac{\theta_0}{\lambda}\right\rceil\leq k\leq \left\lfloor \frac{\theta_1}{\lambda}\right\rfloor-p-1\biggr\},
\end{equation}
and
\begin{equation}
\label{eq:lambda_condition}
    \lambda < \frac{\theta_1-\theta_0}{2(p+1)}.
\end{equation}
Under these conditions, there exist
\begin{equation}
\label{eq:theta_prime}
    \theta_0'=\lambda(\underline k +p),\quad \theta_1'=\lambda \overline k,
\end{equation}
such that $\theta_0 \leq \theta_0' < \theta_1' \leq \theta_1$, for which $h(\theta)$ satisfies:
\begin{enumerate}[label=\roman*)]
    \item\label{item:req-i} $h(\theta)=-1$, for all $\theta \in [\theta_0',\theta_1']$;
    \item\label{item:req-ii} $h(\theta)$ has no local minima outside $[\theta_0',\theta_1']$.
\end{enumerate}
For later use, we denote
\begin{equation}    
\label{eq:k_bar_notation}
    \underline k = \left\lceil \frac{\theta_0}{\lambda}\right\rceil,
    \quad
    \overline k = \left\lfloor \frac{\theta_1}{\lambda}\right\rfloor - p.
\end{equation}
\end{lemma}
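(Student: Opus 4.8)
The plan is to verify, in order: (a) that the index set $\mathcal K_\theta$ is non-empty and in fact a sub-collection of $\mathcal K_\bullet$, so that the prescription \eqref{eq:pk} is consistent with the hypotheses of the earlier Proposition (i.e.\ $P_\ell=0$ for $\ell\in\mathcal K_\circ$, and $P_k=P_{k\pm T}$ is imposed by periodic extension); (b) that the values $\theta_0',\theta_1'$ defined in \eqref{eq:theta_prime} satisfy the sandwiching $\theta_0\le\theta_0'<\theta_1'\le\theta_1$; (c) item \ref{item:req-i}; and (d) item \ref{item:req-ii}.

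For (a)--(b), I would start from the definitions $\underline k=\lceil\theta_0/\lambda\rceil$ and $\overline k=\lfloor\theta_1/\lambda\rfloor-p$ in \eqref{eq:k_bar_notation}, so that $\mathcal K_\theta=\{\underline k,\ldots,\overline k-1\}$. The floor/ceiling bounds give $\theta_0\le\lambda\underline k$ and $\lambda\overline k\le\theta_1-\lambda p\le\theta_1$, hence $\theta_0\le\theta_0'=\lambda(\underline k+p)$ is \emph{not} immediate — actually $\theta_0'\ge\theta_0$ needs $\lambda(\underline k+p)\ge\theta_0$, which holds since $\lambda\underline k\ge\theta_0\ge\theta_0-\lambda p$ trivially, wait, we need the reverse; it follows because $\lambda\underline k \ge \theta_0$ already and $\lambda p\ge 0$. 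For the upper bound $\theta_1'=\lambda\overline k=\lambda(\lfloor\theta_1/\lambda\rfloor-p)\le\theta_1-\lambda p\le\theta_1$. The strict inequality $\theta_0'<\theta_1'$, equivalently $\underline k+p<\overline k$, i.e.\ $\lceil\theta_0/\lambda\rceil + p < \lfloor\theta_1/\lambda\rfloor - p$, is where condition \eqref{eq:lambda_condition} enters: using $\lceil\theta_0/\lambda\rceil<\theta_0/\lambda+1$ and $\lfloor\theta_1/\lambda\rfloor>\theta_1/\lambda-1$, it suffices to have $\theta_0/\lambda+1+p<\theta_1/\lambda-1-p$, i.e.\ $(\theta_1-\theta_0)/\lambda>2(p+1)$, which is exactly \eqref{eq:lambda_condition}. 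The same chain shows $\mathcal K_\theta\neq\emptyset$ and that its elements lie in $\{0,\ldots,T-p-1\}=\mathcal K_\bullet$, provided $\theta_0,\theta_1\in(0,2\pi)$ and $\lambda=2\pi/T$.

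For (c), item \ref{item:req-i}: by Lemma~\ref{lem:support} and property \ref{prop:support}, $\sigma_k$ is supported on $[\lambda k,\lambda(k+p+1))$, and by the partition of unity \ref{prop:active_interval} the scaled splines satisfy $\sum_k\sigma_k(\theta)=1$ on any interval. On $[\theta_0',\theta_1']=[\lambda(\underline k+p),\lambda\overline k]$, the only indices $k$ with $\sigma_k(\theta)\neq 0$ are those with $\lambda k\le\theta<\lambda(k+p+1)$; for $\theta$ in this interval these indices all lie in $\{\underline k,\ldots,\overline k-1\}=\mathcal K_\theta$ (this is precisely why the offsets $+p$ and $-p$ appear in \eqref{eq:theta_prime}/\eqref{eq:k_bar_notation} — they trim away the indices whose support only partially overlaps). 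Since $P_k=-1$ for all such $k$, $h(\theta)=\sum_k P_k\sigma_k(\theta)=-\sum_k\sigma_k(\theta)=-1$ there. For (d), item \ref{item:req-ii}: outside $[\theta_0',\theta_1']$ (within one period) every active index $k$ has either $P_k=0$ or $P_k=-1$, so $h$ is a convex combination of the values $\{0,-1\}$ with non-negative spline weights summing to one; hence $-1\le h(\theta)\le 0$ everywhere, and $h(\theta)=-1$ would force all active weights on indices with $P_k=0$ to vanish, which (again by the support bookkeeping) only happens inside $[\theta_0',\theta_1']$. Therefore $h(\theta)>-1$ strictly outside, so no local minimum of value $-1$ occurs there; and on the "ramp" regions $h$ is monotone between $0$ and $-1$ because only one new spline enters/leaves at a time, precluding interior local minima.

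The main obstacle I anticipate is the careful index bookkeeping in step (c)--(d): pinning down exactly which $k$ are "active" at a given $\theta$ and checking that the $+p$/$-p$ shifts make the active set coincide with (or stay inside) $\mathcal K_\theta$ on $[\theta_0',\theta_1']$, and that on the transition regions the splines enter one at a time so monotonicity holds. The monotonicity/no-interior-minimum claim for non-uniform-looking ramp regions may require invoking the variation-diminishing property of B-spline curves (a control polygon with values in $\{0,-1\}$ that is monotone yields a monotone curve), which I would cite from \cite{lyche2018foundations} rather than reprove.
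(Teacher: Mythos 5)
Your proposal is correct, but it routes around the paper's central computation. The index bookkeeping and the derivation of $\theta_0 \leq \theta_0' < \theta_1' \leq \theta_1$ from \eqref{eq:lambda_condition} via $\lceil x\rceil \leq x+1$, $\lfloor x\rfloor \geq x-1$ are exactly the paper's steps. For item i) you use the partition of unity \ref{prop:active_interval}: on $[\theta_0',\theta_1']$ every index active at $\theta$ lies in $\mathcal K_\theta$, so $h(\theta)=-\sum_k\sigma_k(\theta)=-1$; this is a clean, direct argument that the paper does not use. The paper instead computes
\begin{equation*}
\frac{d}{d\theta}h(\theta)=\frac{1}{\lambda}\Bigl[M_{p-1}\Bigl(\tfrac{\theta}{\lambda}-\overline k\Bigr)-M_{p-1}\Bigl(\tfrac{\theta}{\lambda}-\underline k\Bigr)\Bigr]
\end{equation*}
by telescoping property \ref{prop:derivative} over $k=\underline k,\dots,\overline k-1$, reads off a sign table from the disjoint supports $[\lambda\underline k,\lambda(\underline k+p))$ and $[\lambda\overline k,\lambda(\overline k+p))$ of the two $M_{p-1}$ terms, and anchors the values with $h(0)=0$; this single computation delivers i) and ii) simultaneously. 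The one place where your argument carries less weight than it appears is item ii): the observation that $-1\leq h\leq 0$ with $h>-1$ outside $[\theta_0',\theta_1']$ only excludes \emph{global} minima there, so the entire burden falls on monotonicity over the two ramp regions. You state this monotonicity but justify it only by ``one spline enters/leaves at a time'' plus a deferred citation of the variation-diminishing/monotone-control-polygon property. That citation is legitimate and closes the gap, but note that the standard proof of that very property is the telescoping derivative identity above, so the paper's route is both self-contained and shorter; if you keep your version, make the variation-diminishing step explicit rather than conditional (``may require invoking'').
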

\begin{proof}
Recalling \eqref{eq:sigma_support} we have that all indices for which the support of $\sigma_k(\theta)$ is fully inside the interval $[\theta_0, \theta_1]$ are those which respect $\frac{\theta_0}{\lambda}<k<\frac{\theta_1}{\lambda}-p-1$, which may be equivalently stated as in \eqref{eq:index_theta}, where we used the properties $\lfloor x\pm n\rfloor=\lfloor x\rfloor\pm n$, $\lceil x\pm n\rceil=\lceil x\rceil\pm n$. 

Next, introducing \eqref{eq:pk} into \eqref{eq:dirichlet_spline} and applying $d/d\theta$, we obtain
\begin{align}
    \frac{d}{d\theta}h(\theta)&=-\frac{d}{d\theta}\sum\limits_{k=\underline k}^{\overline k-1}\sigma_k(\theta)\\
    \nonumber &\overset{\ref{prop:derivative}}{=}-\frac{1}{\lambda}\sum\limits_{k=\underline k}^{\overline k-1}\biggl[M_{p-1}\biggl(\frac{\theta}{\lambda} - k\biggr)-M_{p-1}\biggl(\frac{\theta}{\lambda} - k-1\biggr)\biggr]\\ 
    \nonumber &=-\frac{1}{\lambda}\biggl[M_{p-1}\biggl(\frac{\theta}{\lambda}-\underline k\biggr)-M_{p-1}\biggl(\frac{\theta}{\lambda}-\underline k-1\biggr)\\
    \nonumber &+M_{p-1}\biggl(\frac{\theta}{\lambda}-\underline k-1\biggr)-...-M_{p-1}\biggl(\frac{\theta}{\lambda}-\overline k\biggr)\biggr]\\
    \nonumber&=\frac{1}{\lambda}\biggl[-M_{p-1}\biggl(\frac{\theta}{\lambda} - \underline k\biggr)+M_{p-1}\biggl(\frac{\theta}{\lambda} - \overline k\biggr)\biggr].
\end{align}
Replacing $p$ with $p-1$ in \eqref{eq:sigma_support} and using that the spline functions $M_{p-1}(\cdot)$ are positive, we obtain the sign table:
\begin{figure}[h!]
    \includegraphics[width=\columnwidth]{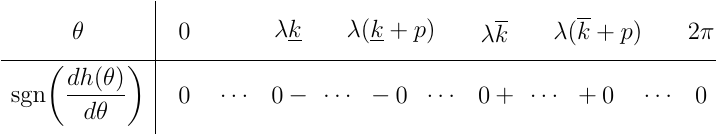}
    \label{fig:sign-table}
\end{figure}

The underlying assumption in constructing this table is that the order relations $0<\lambda \underline k< \lambda(\underline k+p)<\lambda \overline k<\lambda (\overline k+p)<2\pi$ hold. Using the a priori given notation for $\underline k,\overline k$ in \eqref{eq:k_bar_notation} and removing redundancies, these conditions reduce to
\begin{equation*}
    0<\underline k, \quad \overline k\leq \left\lceil\frac{2\pi}{\lambda}\right\rceil-p, \quad \underline k+p<\overline k.
\end{equation*}
which, together with the properties $\lceil x\rceil \leq x + 1$ and $\lfloor x\rfloor \geq x - 1$, lead to the sufficient condition \eqref{eq:lambda_condition}.

More precisely,
\begin{align*}
    \underline k+p<\overline{k} \implies \left\lceil \frac{\theta_0}{\lambda}\right\rceil+p < \left\lfloor \frac{\theta_1}{\lambda}\right\rfloor-p \\
    \overset{\lceil x\rceil \leq x + 1 \hphantom{-}\& \hphantom{-}\lfloor x\rfloor \geq x - 1}{\implies} \left\lceil \frac{\theta_0}{\lambda}\right\rceil+p \leq \frac{\theta_0}{\lambda}+1+p \\
    < \frac{\theta_1}{\lambda}-1-p \leq \left\lfloor \frac{\theta_1}{\lambda}\right\rfloor-p \\
    \implies 2(p+1)<\frac{\theta_1-\theta_0}{\lambda} \implies \lambda < \frac{\theta_1-\theta_0}{2(p+1)}.
\end{align*}

Furthermore, since $h(0) = h(2\pi) = 0$ by construction, the sign table shows that the interval on which $h(\theta) = -1$ is precisely the one in item~\ref{item:req-i}, for $\theta_0',\theta_1'$ given in \eqref{eq:theta_prime}. Finally, the sign pattern also shows that on this same interval the function attains its minimum and that, outside it, there is no other local minimum, thus verifying item~\ref{item:req-ii} and concluding the proof. 
\end{proof}
Using the previous result we may now arrive at the final formulation for the harmonic potential surface, one which considers the choice of control points from Lemma~\ref{lem:pk}.
\begin{corollary}
With $h(\theta)$ from \eqref{eq:dirichlet_spline} verifying \eqref{eq:pk}--\eqref{eq:theta_prime}, the harmonic surface \eqref{eq:harmonic_potential_bspline} becomes
\begin{multline}
    \label{eq:harmonic_potential_bspline3final}
    \gamma(r,\theta)=-\frac{\lambda(\overline k - \underline k)}{2\pi}-\frac{\lambda}{2\pi(j\lambda)^{p+1}}\sum\limits_{\ell=0}^{p}(-1)^\ell{p\choose \ell}\\
    \biggl[
    (-1)^{p+1}\biggl(\li_{p+1}\mkern-4mu\left(re^{j[\lambda (\ell+\underline k)-\theta]}\right)
    -\li_{p+1}\mkern-4mu\left(re^{j[\lambda (\ell+\overline k)-\theta]}\right)\biggr)\\
    +\li_{p+1}\mkern-4mu\left(re^{-j[\lambda (\ell+\underline k)-\theta]}\right)
    -\li_{p+1}\mkern-4mu\left(re^{-j[\lambda (\ell+\overline k)-\theta]}\right)\biggr].
\end{multline}
\end{corollary}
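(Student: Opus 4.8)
The plan is to substitute the piecewise-constant choice \eqref{eq:pk} directly into the general formula \eqref{eq:harmonic_potential_bspline} and then exploit the combinatorial structure of the resulting finite sum. First I would note that the hypotheses of Lemma~\ref{lem:pk}, together with $\lambda = 2\pi/T$, force $\mathcal K_\theta = \{\underline k,\dots,\overline k - 1\}\subseteq \mathcal K_\bullet$ (indeed $\theta_0 > 0$ gives $\underline k \geq 1$, and $\theta_1 < 2\pi$ gives $\overline k \leq T-p-1$), so restricting the outer sum of \eqref{eq:harmonic_potential_bspline} to $\mathcal K_\bullet$ is consistent with $P_k = 0$ off $\mathcal K_\theta$. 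Since $P_k = -1$ on the $\overline k - \underline k$ indices of $\mathcal K_\theta$, the constant contribution $\tfrac{\lambda}{2\pi}\sum_{k\in\mathcal K_\bullet}P_k$ collapses at once to $-\tfrac{\lambda(\overline k-\underline k)}{2\pi}$, which is the first term of \eqref{eq:harmonic_potential_bspline3final}.

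For the polylogarithmic part, I would write the inner block as $\sum_{k=\underline k}^{\overline k-1}\sum_{\ell=0}^{p+1}(-1)^\ell\binom{p+1}{\ell}\, f(\ell+k)$, where $f(\cdot)$ denotes generically either $m \mapsto \li_{p+1}\!\bigl(re^{j[\lambda m-\theta]}\bigr)$ or its conjugate $m\mapsto \li_{p+1}\!\bigl(re^{-j[\lambda m-\theta]}\bigr)$. The decisive step is Pascal's rule $\binom{p+1}{\ell}=\binom{p}{\ell}+\binom{p}{\ell-1}$: inserting it, re-indexing $\ell\mapsto\ell+1$ in the $\binom{p}{\ell-1}$-piece, and using $\binom{p}{-1}=\binom{p}{p+1}=0$ to discard the boundary terms rewrites $\sum_{\ell=0}^{p+1}(-1)^\ell\binom{p+1}{\ell} f(\ell+k)$ as the first difference $g(k)-g(k+1)$ with $g(k)=\sum_{\ell=0}^{p}(-1)^\ell\binom{p}{\ell} f(\ell+k)$. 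The sum over $k$ then telescopes to $g(\underline k)-g(\overline k)=\sum_{\ell=0}^{p}(-1)^\ell\binom{p}{\ell}\bigl[f(\ell+\underline k)-f(\ell+\overline k)\bigr]$.

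Applying this identity separately to the $(-1)^{p+1}\li_{p+1}\!\bigl(re^{j[\lambda(\ell+k)-\theta]}\bigr)$ term and to the $\li_{p+1}\!\bigl(re^{-j[\lambda(\ell+k)-\theta]}\bigr)$ term, carrying along the common prefactor $-\tfrac{\lambda}{2\pi(j\lambda)^{p+1}}$ (note the sign flip coming from $P_k=-1$), and regrouping the $\underline k$ and $\overline k$ contributions yields precisely \eqref{eq:harmonic_potential_bspline3final}. I expect the main obstacle to be purely bookkeeping: tracking the two index shifts correctly, confirming that the endpoint binomial coefficients vanish so the telescoped sum truly retains only the two surviving terms, and checking that the same rearrangement goes through verbatim for the holomorphic and anti-holomorphic polylogarithm pieces (which differ only in the sign of the exponent). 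No analytic subtlety enters, since the whole argument is a finite algebraic manipulation of the already-established expansion \eqref{eq:harmonic_potential_bspline}.
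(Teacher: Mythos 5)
Your proof is correct, and it takes a genuinely different route from the paper's. The paper does not manipulate the polylogarithms directly: it returns to the intermediate series form \eqref{eq:harmonic_potential_bspline1}, evaluates the finite geometric progression $\sum_{k=\underline k}^{\overline k-1}e^{\pm jn\lambda k}$ in closed form with denominator $1-e^{\pm jn\lambda}$, cancels that denominator against the binomial identity $\sum_{\ell=0}^{p+1}(-1)^\ell\binom{p+1}{\ell}e^{\pm jn\lambda\ell}=(1-e^{\pm jn\lambda})^{p+1}$, re-expands the surviving factor $(1-e^{\pm jn\lambda})^{p}$ into $\binom{p}{\ell}$ terms, and only then resums over $n$ into polylogarithms --- doing this separately for the term and its conjugate. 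Your argument instead works entirely at the level of the finished expression \eqref{eq:harmonic_potential_bspline}: Pascal's rule converts the order-$(p+1)$ alternating binomial sum into the first difference $g(k)-g(k+1)$ of an order-$p$ sum, and the $k$-sum telescopes to $g(\underline k)-g(\overline k)$. The two are algebraically equivalent (the geometric-sum/cancellation step is the analytic shadow of your finite-difference/telescoping step), but yours is cleaner: it is a purely finite combinatorial manipulation that never re-enters the $n$-series, it applies verbatim to both the holomorphic and anti-holomorphic polylog pieces since the identity depends only on the argument $\ell+k$, and it makes transparent \emph{why} the binomial order drops from $p+1$ to $p$ (the sum over $k$ absorbs exactly one order of finite differencing). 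Your preliminary check that $\mathcal K_\theta=\{\underline k,\dots,\overline k-1\}\subseteq\mathcal K_\bullet$ is also a detail the paper leaves implicit, and the endpoint conventions $\binom{p}{-1}=\binom{p}{p+1}=0$ you invoke are standard and correct.
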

\begin{proof}
Introducing into \eqref{eq:harmonic_potential_bspline1} the control points \eqref{eq:pk} for the index set $\mathcal K_\theta$ defined in \eqref{eq:index_theta} under condition \eqref{eq:lambda_condition}, and reordering the sums, we arrive at:
\begin{multline}
    \label{eq:harmonic_potential_bspline3}
    \gamma(r,\theta)=-\frac{\lambda}{2\pi}\biggl[(\overline k - \underline k) +\frac{1}{(j\lambda)^{p+1}}\sum\limits_{\ell=0}^{p+1}(-1)^\ell{p+1\choose \ell}\\
    \sum\limits_{n \geq 1}\frac{r^n}{n^{p+1}}\biggl(\sum\limits_{k=\underline k}^{\overline k-1}L_{n,\ell,k,\theta}\biggr)\biggr].
\end{multline}
From theory we have that:
\begin{align*}
    \sum\limits_{k=0}^{N-1} e^{jkx}=\frac{e^{j0x}-e^{jNx}}{1-e^{jx}}=\frac{1-e^{jNx}}{1-e^{jx}},
\end{align*}
which leads to:
\begin{align*}
    \sum\limits_{k=\underline k}^{\overline{k}-1} e^{jn[\lambda(l+k)-\theta]}=\sum\limits_{k=\underline k}^{\overline{k}-1} e^{jn\lambda \ell} e^{jn\lambda k} e^{-jn\theta} = e^{jn(\lambda \ell-\theta)}\sum\limits_{k=\underline k}^{\overline{k}-1} e^{jn\lambda k} \\
    = e^{jn(\lambda \ell -\theta)}\frac{e^{jn\lambda \underline k} - e^{jn\lambda \overline{k}}}{1-e^{jn\lambda}} = \frac{e^{jn[\lambda(\ell+\underline k)-\theta]}-e^{jn[\lambda(\ell+\overline{k})-\theta]}}{1-e^{jn\lambda}}.
\end{align*}
For compactness, we note $\phi=n\lambda$, $\psi=n(\lambda \ell-\theta)$ and obtain
\begin{align*}
    \sum\limits_{k=\underline k}^{\overline{k}-1} e^{jn[\lambda(l+k)-\theta]}=\sum\limits_{k=\underline k}^{\overline{k}-1} e^{j(\phi k+\psi)}=\frac{e^{j(\phi \underline k+\psi)}-e^{j(\phi \overline{k}+\psi)}}{1-e^{j\phi}},  
\end{align*}
which further allows one to write
\begin{multline*}
    \sum\limits_{k=\underline k}^{\overline k-1}L_{n,\ell,k,\theta}=\sum\limits_{k=\underline k}^{\overline k-1}(-1)^{p+1}e^{jn[\lambda(\ell+k)-\theta]}+e^{-jn[\lambda(\ell+k)-\theta]}\\
    =\sum\limits_{k=\underline k}^{\overline k-1}(-1)^{p+1}e^{j(\phi k+\psi)}+e^{-j(\phi k+\psi)}\\
    =(-1)^{p+1}\sum\limits_{k=\underline k}^{\overline k-1}e^{j(\phi k+\psi)}+\sum\limits_{k=\underline k}^{\overline k-1}e^{-j(\phi k+\psi)}\\
    =(-1)^{p+1}\underbrace{\frac{e^{j(\phi \underline k+\psi)}-e^{j(\phi \overline k+\psi)}}{1-e^{j\phi}}}_{(\star)}+\underbrace{\frac{e^{-j(\phi \underline k+\psi)}-e^{-j(\phi \overline k+\psi)}}{1-e^{-j\phi}}}_{(\star\star)}
\end{multline*}
Tedious but straightforward calculations render term 
\begin{align}
\nonumber &\sum\limits_{\ell=0}^{p+1}(-1)^\ell{p+1\choose \ell}
    \sum\limits_{n \geq 1}\frac{r^n}{n^{p+1}}\cdot (\star)=\\
    \nonumber&=\sum\limits_{\ell=0}^{p+1}(-1)^\ell{p+1\choose \ell}
    \sum\limits_{n \geq 1}\frac{r^n}{n^{p+1}}\frac{e^{jn[\lambda(\ell+\underline k)-\theta]}-e^{jn[\lambda(\ell+\overline{k})-\theta]}}{1-e^{jn\lambda}}\\
\nonumber&=\sum\limits_{\ell=0}^{p+1}(-1)^\ell{p+1\choose \ell}
    \sum\limits_{n\geq 1}\frac{r^n}{n^{p+1}}\frac{e^{jn(\lambda \ell-\theta)}}{1-e^{jn\lambda}}\biggl[e^{jn\lambda\underline k}-e^{jn\lambda\overline k}\biggr]\\
\nonumber     &=\sum\limits_{n\geq 1}\frac{r^n}{n^{p+1}}\frac{e^{-jn\theta}}{1-e^{jn\lambda}} \biggl[e^{jn\lambda\underline k}-e^{jn\lambda\overline k}\biggr]\sum\limits_{\ell=0}^{p+1}e^{jn\lambda \ell}(-1)^\ell{p+1\choose \ell}\\
\nonumber     &=\sum\limits_{n\geq 1}\frac{r^n}{n^{p+1}}\frac{e^{-jn\theta}}{1-e^{jn\lambda}} \biggl[e^{jn\lambda\underline k}-e^{jn\lambda\overline k}\biggr]\left(1-e^{jn\lambda}\right)^{p+1}\\
\label{eq:harmonic_potential_bspline4}    &=\sum\limits_{n\geq 1}\frac{r^ne^{-jn\theta}}{n^{p+1}} \biggl[e^{jn\lambda\underline k}-e^{jn\lambda\overline k}\biggr]\left(1-e^{jn\lambda}\right)^{p}\\
\nonumber     &=\sum\limits_{\ell=0}^{p}(-1)^\ell{p\choose \ell}\sum\limits_{n\geq 1}\frac{r^ne^{-jn\theta}}{n^{p+1}} \biggl[e^{jn\lambda\underline k}-e^{jn\lambda\overline k}\biggr]e^{jn\lambda \ell}\\
\nonumber     &=\sum\limits_{\ell=0}^{p}(-1)^\ell{p\choose \ell}\sum\limits_{n\geq 1}\frac{r^n}{n^{p+1}} \biggl[e^{jn[\lambda(\ell+\underline k)-\theta]}-e^{jn[\lambda(\ell+\overline k)-\theta]}\biggr]\\
\nonumber   &=\sum\limits_{\ell=0}^{p}(-1)^\ell{p\choose \ell}\biggl(\sum\limits_{n\geq 1}\frac{r^n}{n^{p+1}}e^{jn[\lambda(\ell+\underline k)-\theta]}\\
\nonumber &-\sum\limits_{n\geq 1}\frac{r^n}{n^{p+1}}e^{jn[\lambda(\ell+\overline{k})-\theta]}\biggr)\\
\nonumber &=\sum\limits_{\ell=0}^{p}(-1)^\ell{p\choose \ell}\mkern-4mu\biggl(\sum\limits_{n\geq 1}\frac{\bigl(re^{j[\lambda(\ell+\underline k)-\theta]}\bigr)^n}{n^{p+1}}\mkern-5mu-\mkern-7mu \sum\limits_{n\geq 1}\frac{\bigl(re^{j[\lambda(\ell+ \overline{k})-\theta]}\bigr)^n}{n^{p+1}}\biggr)\\
\nonumber     &=\mkern-6mu\sum\limits_{\ell=0}^{p}(-1)^\ell{p\choose \ell}\mkern-6mu\left[\li_{p+1}\mkern-4mu\left(re^{j[\lambda (\ell+\underline k)-\theta]}\right)\mkern-2mu-\mkern-2mu\li_{p+1}\mkern-4mu\left(re^{j[\lambda (\ell+\overline k)-\theta]}\right)\right].
\end{align}
Further noting that $(\star\star)$ is the conjugate of $(\star)$ gives
\begin{multline*}
    \sum\limits_{\ell=0}^{p+1}(-1)^\ell{p+1\choose \ell}
    \sum\limits_{n \geq 1}\frac{r^n}{n^{p+1}}\cdot (\star\star)\\
    =\sum\limits_{\ell=0}^{p+1}(-1)^\ell{p+1\choose \ell}\mkern-4mu
    \sum\limits_{n \geq 1}\frac{r^n}{n^{p+1}}\frac{e^{-jn[\lambda(\ell+\underline k)-\theta]}-e^{-jn[\lambda(\ell+\overline{k})-\theta]}}{1-e^{jn\lambda}}\\
    =\sum\limits_{\ell=0}^{p+1}(-1)^\ell{p+1\choose \ell}\sum\limits_{n \geq 1}\frac{r^n}{n^{p+1}}\frac{e^{-jn(\lambda \ell-\theta)}}{1-e^{-jn\lambda}}\biggl[e^{-jn\lambda\underline k}-e^{-jn\lambda\overline k}\biggr]\\
    =\sum\limits_{n\geq 1}\frac{r^n}{n^{p+1}}\frac{e^{jn\theta}}{1-e^{-jn\lambda}} \biggl[e^{-jn\lambda\underline k}-e^{-jn\lambda\overline k}\biggr]\\
    \cdot\sum\limits_{\ell=0}^{p+1}e^{-jn\lambda \ell}(-1)^\ell{p+1\choose \ell}\\
    =\sum\limits_{n\geq 1}\frac{r^n}{n^{p+1}}\frac{e^{jn\theta}}{1-e^{-jn\lambda}} \biggl[e^{-jn\lambda\underline k}-e^{-jn\lambda\overline k}\biggr](1-e^{-jn\lambda})^{p+1}
\end{multline*}    
\begin{multline}
\label{eq:harmonic_potential_bspline5}
    =\sum\limits_{n\geq 1}\frac{r^ne^{jn\theta}}{n^{p+1}} \biggl[e^{-jn\lambda\underline k}-e^{-jn\lambda\overline k}\biggr]\left(1-e^{-jn\lambda}\right)^{p}\\
    \sum\limits_{\ell=0}^{p}(-1)^\ell{p\choose \ell}\sum\limits_{n\geq 1}\frac{r^ne^{jn\theta}}{n^{p+1}} \biggl[e^{-jn\lambda\underline k}-e^{-jn\lambda\overline k}\biggr]e^{-jn\lambda \ell}\\
    =\sum\limits_{\ell=0}^{p}(-1)^\ell{p\choose \ell}\sum\limits_{n\geq 1}\frac{r^n}{n^{p+1}} \biggl[e^{-jn[\lambda(\ell+\underline k)-\theta]}-e^{-jn[\lambda(\ell+\overline k)-\theta]}\biggr]\\
    =\sum\limits_{\ell=0}^{p}(-1)^\ell{p\choose \ell}\biggl(\sum\limits_{n\geq 1}\frac{\bigl(re^{-j[\lambda(\ell+\underline k)-\theta]}\bigr)^n}{n^{p+1}}- \sum\limits_{n\geq 1}\frac{\bigl(re^{-j[\lambda(\ell+ \overline{k})-\theta]}\bigr)^n}{n^{p+1}}\biggr)\\
    =\sum\limits_{\ell=0}^{p}(-1)^\ell{p\choose \ell}\biggl[\li_{p+1}\left(re^{-j[\lambda (\ell+\underline k)-\theta]}\right)\\
    -\li_{p+1}\left(re^{-j[\lambda (\ell+\overline k)-\theta]}\right)\biggr].
\end{multline}
Introducing \eqref{eq:harmonic_potential_bspline4} and \eqref{eq:harmonic_potential_bspline5} back into \eqref{eq:harmonic_potential_bspline3} yields \eqref{eq:harmonic_potential_bspline3final}, thus concluding the proof. 
\end{proof}

\begin{figure*}[!ht]
    \centering
    \subfloat[cardinal B-spline curve]{\label{fig:spline_curve}\includegraphics[width=.33\textwidth]{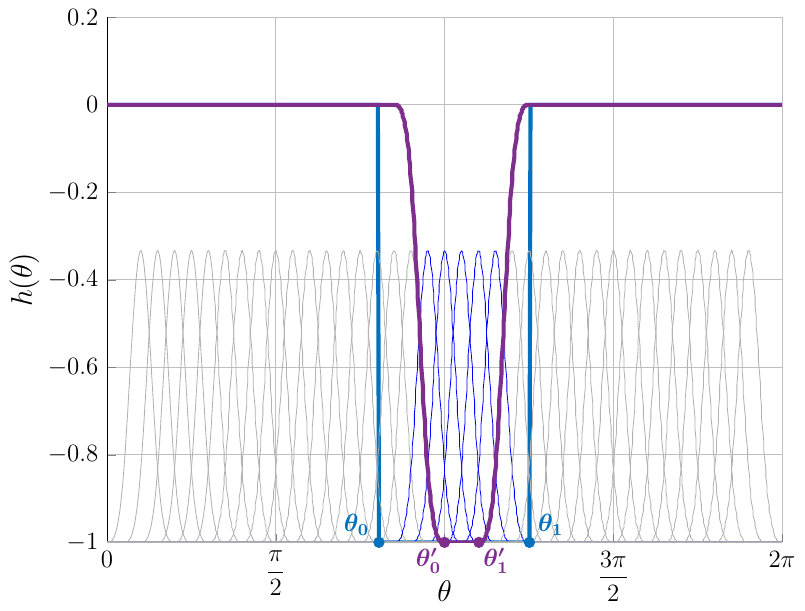}}\hfill
    \subfloat[curve mapped onto the circle and associated potential surface]{\label{fig:gamma_disk_mapped}\includegraphics[width=.33\textwidth]{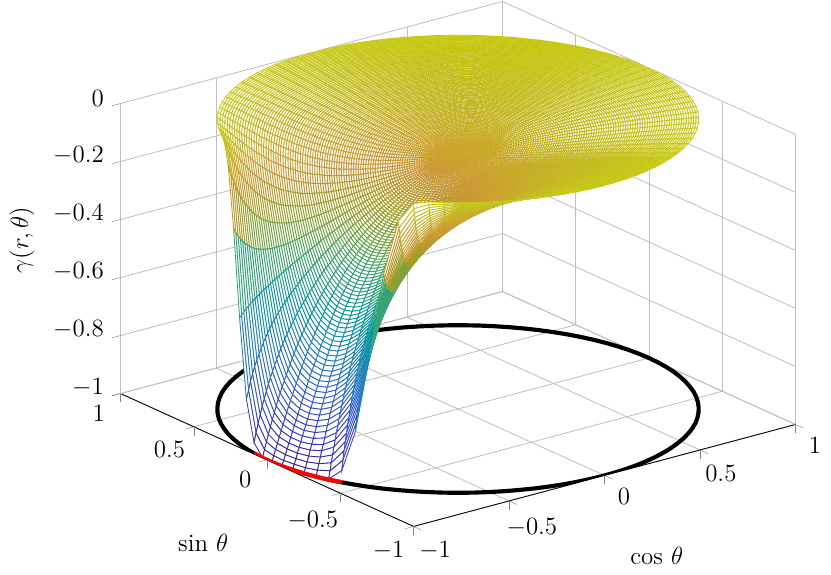}}\hfill
    \subfloat[curve mapped onto the polytope and associated potential surface]{\label{fig:polytope_curve_mapped}\includegraphics[width=.33\textwidth]{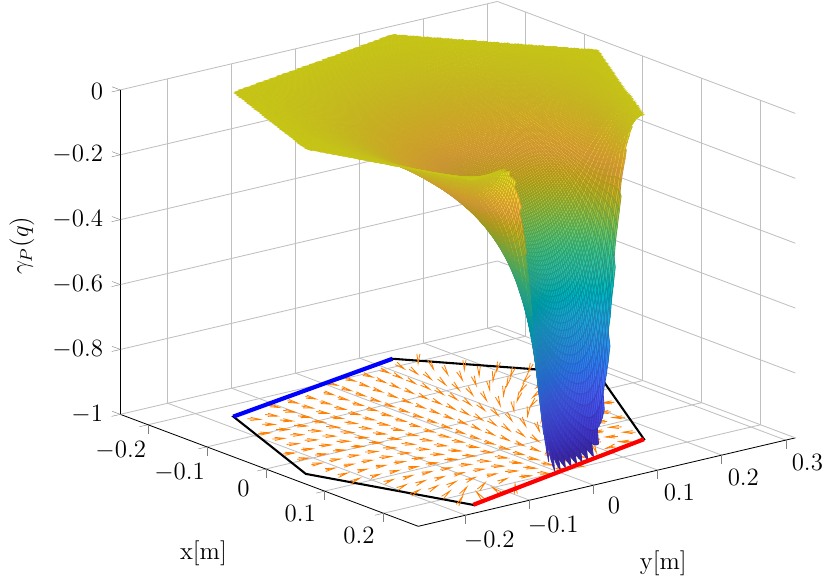}}\hfill
    \caption{Mapping of the induced potential surface from the unit disk to the polyhedral case.}
    \label{fig:harmonic}
\end{figure*}

\subsection*{Illustrative example}

To illustrate these constructions, we take $\lambda = \pi/20$, $p = 3$, and $T = 40$. With $\theta_0 = 2.52$ and $\theta_1 = 3.93$, and applying \eqref{eq:pk}, we obtain in Fig.~\ref{fig:spline_curve} the function $h(\theta)$ defined in \eqref{eq:dirichlet_spline}, where we observe the desired behavior: smooth variation and no local minima outside the interval $[\theta_0',\theta_1']$, with  $\theta_0'=3.14,\theta_1'=3.45$ as in \eqref{eq:theta_prime}. For better illustration, the cardinal splines are vertically shifted by $-1$; those multiplied by $P_k = -1$ are shown in blue, while those multiplied by $P_k = 0$ are shown in light gray. 

In Fig.~\ref{fig:gamma_disk_mapped} we plot the harmonic potential surface \eqref{eq:harmonic_potential_bspline3final}, where the polylogarithm functions are approximated by truncating the series \eqref{eq:polylog} after the first $15$ terms. We observe that the disk contour follows the wrapped-around $h(\theta)$ and that the interior surface is free of undesired extrema.

Lastly, consider the polyhedral set
\begin{equation*}
    P=\biggl\{q\in \mathbb R^2:\:\bbm -0.33& -0.89\\\hphantom{-}0.28& -0.69\\\hphantom{-}0.73& \hphantom{-}0.15\\\hphantom{-}0.48& \hphantom{-}0.83\\-0.58& \hphantom{-}0.65\\-0.87& -0.16\ebm q\geq \bbm \hphantom{-}0.28\\\hphantom{-}0.66\\\hphantom{-}0.65\\\hphantom{-}0.28\\-0.48\\-0.44\ebm\biggr\}. 
\end{equation*}
Using \eqref{eq:betaq}--\eqref{eq:betamax} we obtain $\beta(q)$ as in \eqref{eq:betaq}, which decreases from its maximum at $\beta_{\max} = 0.04\cdot10^{-3}$ from \eqref{eq:betamax} towards zero at the boundary of the polytope. Substituting $\beta(q)$ into \eqref{eq:pulled} yields the function $\varphi(q)$, which maps $P$ to the unit disk. Selecting the facet with endpoints $\bbm 0.16 & 0.16 \ebm^\top$ and $\bbm 0.23 & -0.16 \ebm^\top$, we obtain the corresponding values $\theta_0 = 2.52$ and $\theta_1 = 3.93$, the ones used in Figs.~\ref{fig:spline_curve}-\ref{fig:gamma_disk_mapped}, obtained by mapping the facet end-points through $\varphi(\cdot)$ and putting the result in polar coordinates. With these, we construct the harmonic potential $\gamma_P(q)$ using \eqref{eq:gamma_pulled}. 

The result is depicted in Fig.~\ref{fig:polytope_curve_mapped}, where we observe the desired behavior: the potential surface equals $h(\varphi(\theta))$ on the boundary and decreases smoothly in the interior so as to ``slide'' towards the exit facet (the segment in solid red). For illustration, the gradient flow induced by the surface is shown in the planar polyhedral cell at the bottom, highlighting that trajectories are constrained to exit through the prescribed facet.

\section{Control policy}
\label{sec:control}

Consider the double integrator dynamics
\begin{equation}
    \label{eq:dynamics}
    \ddot q=u(q).
\end{equation}
For single integrator dynamics, a natural control action is to take it proportional to the surface gradient. \cite{conner2003composition} adapts this to the double integrator case \eqref{eq:dynamics} by, first, denoting the normalized negative gradient of the potential \eqref{eq:gamma_pulled} as\footnote{Notation ``$\nabla_x\phi$'' denotes the gradient of the function `$\phi$' in the variable `$x$' (which may be itself a function, as it happens in the case of composed functions where chain differentiation applies).}
\begin{equation}
\label{eq:xq}
    X(q) = -\frac{\nabla_q\gamma_P}{\|\nabla_q\gamma_P\|}=-\frac{\nabla^\top_q\varphi \cdot \nabla_{\varphi(q)}\gamma}{\|\nabla^\top_q\varphi \cdot\nabla_{\varphi(q)}\gamma\|}
\end{equation}
and, second, its `deviation between instantaneous velocity and surface gradient' as $\dot X(q)= (-\nabla_q X)^\top\, \dot q$. Put together, these give the control action
\begin{equation}
\label{eq:control_action}
     u(q)=K\left(X(q)-\dot q\right)+\dot X(q),
\end{equation}
where $K > 0$ denotes the ``velocity regulation''.




\begin{remark}
    In \eqref{eq:control_action}, $K$ is a proportional gain and its regulation objective is to asymptotically decrease  the velocity error,  $e = X(q)-\dot q$, to zero. As stated in \cite{conner2003construction}, the second term in \eqref{eq:control_action} captures the fluctuations in the vector field, allowing to track the integral curves of the normalized gradient \eqref{eq:xq}.\eor 
\end{remark}

Based on \eqref{eq:decomposition}, we construct the feasible graph $G = (\mathcal N, \mathcal E)$, where the nodes enumerate the  list of \emph{unique} cell facets that do not appear in the description of any obstacle,
\begin{equation}
    \label{eq:graph-nodes}
    \mkern-12mu\mathcal N =
    \text{\texttt{unique}}\bigl\{\mathcal F_i \text{ facet of } P_\ell,\ \forall \ell
    \ \wedge\ \not\exists j \text{ s.t. } \mathcal F_i \in \mathcal O_j\bigr\},
\end{equation}
and the edges, i.e., the facet pairs belonging to a common cell,
\begin{equation}
    \label{eq:graph-edges}
    \mathcal E = \bigl\{(\mathcal F_i, \mathcal F_j):\: \mathcal F_i, \mathcal F_j\in \mathcal N\bigr\}.
\end{equation}

Applying any \emph{shortest-path} algorithm to link\footnote{For simplicity, we assume that the agent’s start and end positions lie on facets; the extension to arbitrary points is straightforward \cite[Sec.~5.2.1]{conner2003construction}.} an initial facet $\mathcal F_{\text{start}}$ to a final facet $\mathcal F_{\text{end}}$, we obtain the sequence
\begin{equation}
\label{eq:graph-path}
    \{\mathcal{F}_{k_1},\ldots, \mathcal{F}_{k_M} \}
    = \text{\texttt{shortest\_path}}(G, \mathcal F_{\text{start}}, \mathcal F_{\text{end}}).
\end{equation}
By construction, $\mathcal{F}_{k_1} = \mathcal F_{\text{start}}$ and $\mathcal{F}_{k_M} = \mathcal F_{\text{end}}$. The sequence of feasible cell indices corresponding to \eqref{eq:graph-path} is obtained as
\begin{equation}
\label{eq:graph-cells}
    \mathbb L = \bigl\{\ell_i : \mathcal F_{k_i}, \mathcal F_{k_{i+1}} \in P_{\ell_i},\ \forall i = 1,\ldots,M-1\bigr\}.
\end{equation}
Having determined the sequence of cells through which the agent has to pass, we substitute the control action \eqref{eq:control_action} into \eqref{eq:dynamics} to arrive at the switched dynamics
\begin{equation}
    \label{eq:switched_dynamics}
    \ddot q = K\left(X_{\ell_i}(q) - \dot q\right) + \dot X_{\ell_i}(q), \quad \forall q \in P_{\ell_i},
\end{equation}
where $X_{\ell_i}(q)$ denotes the normalized gradient corresponding to the harmonic surface associated with cell $P_{\ell_i}$. The detailed construction is given next, in Alg.~\ref{alg:path}.

\begin{algorithm}[!ht]
\caption{Trajectory computation}
\label{alg:path}
\begin{algorithmic}[1]
\REQUIRE{- list of feasible cells $P_\ell$ and obstacles $\mathcal O_j$, as in \eqref{eq:decomposition}; \\
  \hspace{1.2em} - start \& end facets ($\mathcal F_{\text{start}},\: \mathcal F_{\text{end}}$); \\
 \hspace{1.2em} - cardinal B-spline parameters (order $p$ and scaling\\\hspace{1.75em} factor $\lambda$, as used in \eqref{eq:cardinal_splines} - \eqref{eq:sigma}).}
\ENSURE feasible trajectory $q^\star(t)$, where $q^\star(t)\notin \cup_j \mathcal O_j,\: \forall 0\leq t\leq T_{\text{end}}$, with $q^\star(0)\in \mathcal F_{\text{start}}$ and $q^\star(T_{\text{end}})\in \mathcal F_{\text{end}}$.
\STATE Construct the feasible graph $G$, as in \eqref{eq:graph-nodes}--\eqref{eq:graph-edges};
\STATE Find shortest path through the graph $G$ as in \eqref{eq:graph-path};
\FOR{$i=1:M-1$}
    \STATE select $\ell_i\in \mathbb L$, as in  \eqref{eq:graph-cells};
    \STATE obtain $\beta(q)$ as in \eqref{eq:betaq} and $\varphi(q)$ as in \eqref{eq:pulled}, for $P_{\ell_i}$;
    \STATE find $\theta_0, \theta_1$ such that $\varphi(F_{k_{i+1}})$ maps the corresponding unit circle arc;
    \STATE get $\underline k, \overline k$ as in \eqref{eq:k_bar_notation};
    \STATE compute $\gamma(r,\theta)$ as in \eqref{eq:harmonic_potential_bspline3final} and $\gamma_P(q)$ as in \eqref{eq:gamma_pulled};
    \STATE compute control action $u(q)$ as in \eqref{eq:control_action} and apply it to \eqref{eq:dynamics} to obtain closed-loop dynamics \eqref{eq:switched_dynamics};
\ENDFOR
\STATE Integrate the switched closed-loop dynamics \eqref{eq:switched_dynamics} to obtain the trajectory $q^\star(t)$.
\end{algorithmic}
\end{algorithm}
Note that in Alg.~\ref{alg:path} we assume that an environment map is available and that, in a preprocessing step, a classical decomposition algorithm (see, e.g., \cite{mahulea2020path}) is used to generate the list of cells characterizing the free space, as in \eqref{eq:decomposition}.

\subsection*{Illustrative Example}
\label{sec:example}

Fig.~\ref{fig:path} illustrates a cluttered environment with $10$ obstacles (\begin{tikzpicture}[baseline=-2pt, every node/.style={trapezium, draw}]\node[trapezium,draw,trapezium left angle=75, trapezium right angle=45,minimum width=2pt,line width=1pt,black,fill=red, fill opacity=0.3] at(0.4,0.03){};\end{tikzpicture}), which induce a partition of the free space (\begin{tikzpicture}[baseline=-2pt, every node/.style={trapezium, draw}]\node[trapezium,draw,trapezium left angle=75, trapezium right angle=45,minimum width=2pt,line width=1pt,black,fill=white] at(0.4,0.03){};\end{tikzpicture}) into $41$ cells forming a polyhedral complex (adjacent cells satisfy the facet-to-facet property \cite[Sec.~5.1]{ziegler2012lectures}). On the same figure, the connectivity graph linking the feasible cells is depicted: $50$ markers (\begin{tikzpicture}[baseline=-2pt]\node[circle,draw,minimum width=2pt,line width=1pt,black,fill=cyan, fill opacity=0.2] at(0.4,0.03){};\end{tikzpicture}) denote the graph nodes (common facets between adjacent cells), and $77$ edges (\begin{tikzpicture}[baseline=-2pt]\draw[line width=1pt, blue, -Latex] (0,0.03) -- (0.5,0.03);\end{tikzpicture}) link all feasible combinations of ``in'' and ``out'' facets, as defined in \eqref{eq:graph-nodes}--\eqref{eq:graph-edges}. Note that nodes corresponding to obstacle facets and edges leading to them are considered infeasible and removed from the graph. Thus, implicitly, any path through the graph describes an admissible path for the agent.
\begin{figure}[!ht]
\centering
\includegraphics[width=.85\columnwidth]{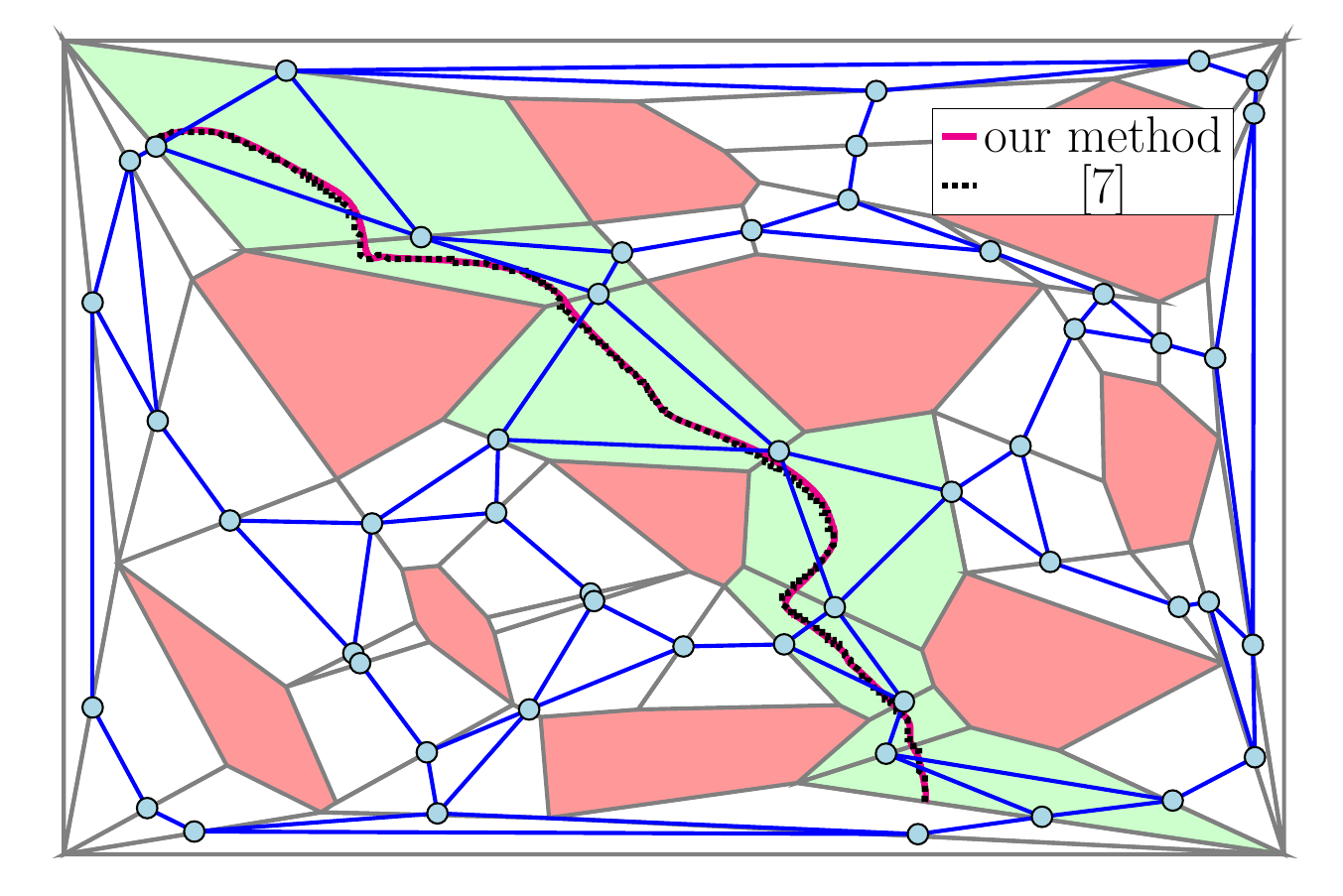}
    \caption{A cluttered environment, partitioned into polyhedral obstacles and free cells as in \eqref{eq:decomposition}, overlaid with the facet connectivity graph \eqref{eq:graph-nodes}--\eqref{eq:graph-edges}. Graph nodes are shown at the centroids of the corresponding facets.}
    \label{fig:path}
\end{figure}
For illustration, in Fig.~\ref{fig:path} a particular path, ${3 \mapsto 1 \mapsto 6 \mapsto 16 \mapsto 17 \mapsto 25 \mapsto 27 \mapsto 28}$, computed as in \eqref{eq:graph-path}, is obtained by employing the Dijkstra's shortest path algorithm  with the edge cost given by the distance between cell facets. The corresponding agent trajectory is then computed by applying Alg.~\ref{alg:path}. We observe the desired behavior: the trajectories (\begin{tikzpicture}[baseline=-2pt,scale=.6, draw=magenta, line width=1pt]\draw (0.3,0.03) to[curve through = {(0.4,-0.14) (1,0.14)}] (1.2,0.03);\end{tikzpicture} - our method, \begin{tikzpicture}[baseline=-2pt,scale=.6, draw=black, densely dotted, line width=1pt]\draw (0.3,0.03) to[curve through = {(0.4,-0.14) (1,0.14)}] (1.2,0.03);\end{tikzpicture} - method from \cite{conner2003construction}) remain within the feasible space, i.e. within the cells composing the path (\begin{tikzpicture}[baseline=-2pt, every node/.style={trapezium, draw}]\node[trapezium,draw,trapezium left angle=75, trapezium right angle=45,minimum width=2pt,line width=1pt,black,fill=green, fill opacity=0.2] at(0.4,0.03){};\end{tikzpicture}) and safely reach the target. The integration time was $T_{\text{end}} = 25.7\text{s}$ and the spline parameters were $p = 3$ and $T = 155$, the latter chosen so that $\lambda = 2\pi/T$ satisfies the bound $0.0541$ obtained from \eqref{eq:lambda_condition}. Table~\ref{tab:path} illustrates the link between the unique cell facets (graph nodes) and the polyhedral cells that contain them: except for the initial (start) and final (target) facets, each intermediate facet serves simultaneously as an ``out'' facet for the preceding cell and an ``in'' facet for the current cell (the inclusion of a facet within a cell is characterized by the $0/1$ flag in the table).

\begin{table}[!ht]
  \centering
  \renewcommand{\arraystretch}{1}
  \begin{tabular}{cc*{7}{c}}
    \toprule
    & & \multicolumn{7}{c}{cell index ($\ell_i\in \mathbb L$, as in \eqref{eq:graph-cells})} \\
    \cmidrule(l){3-9}
    & & 1 & 3 & 9 & 10 & 15 & 16 & 17 \\
    \cmidrule(l){2-9}  
    \multirow{8}{*}{\rotatebox{90}{\shortstack{node index\\(cell facet, as in \eqref{eq:graph-path})}}} 
      &  3 & \cellcolor{blue!40}1 & 0 & 0 & 0 & 0 & 0 & 0 \\
      &  1 & \cellcolor{blue!40}1 & \cellcolor{blue!40}1 & 0 & 0 & 0 & 0 & 0 \\
      &  6 & 0 & \cellcolor{blue!40}1 & \cellcolor{blue!40}1 & 0 & 0 & 0 & 0 \\
      & 16 & 0 & 0 & \cellcolor{blue!40}1 & \cellcolor{blue!40}1 & 0 & 0 & 0 \\
      & 17 & 0 & 0 & 0 & \cellcolor{blue!40}1 & \cellcolor{blue!40}1 & 0 & 0 \\
      & 25 & 0 & 0 & 0 & 0 & \cellcolor{blue!40}1 & \cellcolor{blue!40}1 & 0 \\
      & 27 & 0 & 0 & 0 & 0 & 0 & \cellcolor{blue!40}1 & \cellcolor{blue!40}1 \\
      & 28 & 0 & 0 & 0 & 0 & 0 & 0 & \cellcolor{blue!40}1 \\
    \bottomrule
  \end{tabular}
  \caption{Example of path between nodes 3 and 28}
  \label{tab:path}
\end{table}

Fig.~\ref{fig:waterfall} shows a detail of the construction used in Fig.~\ref{fig:path}. For three consecutive cells ($\ell_3 \mapsto \ell_4 \mapsto \ell_5$), we highlight the ``waterfall'' effect, where the harmonic potential surface of each cell funnels the agent so that it slides from one cell to the next through the a priori selected facets. 
\begin{figure}[!ht]
\centering
\includegraphics[width=.9\columnwidth]{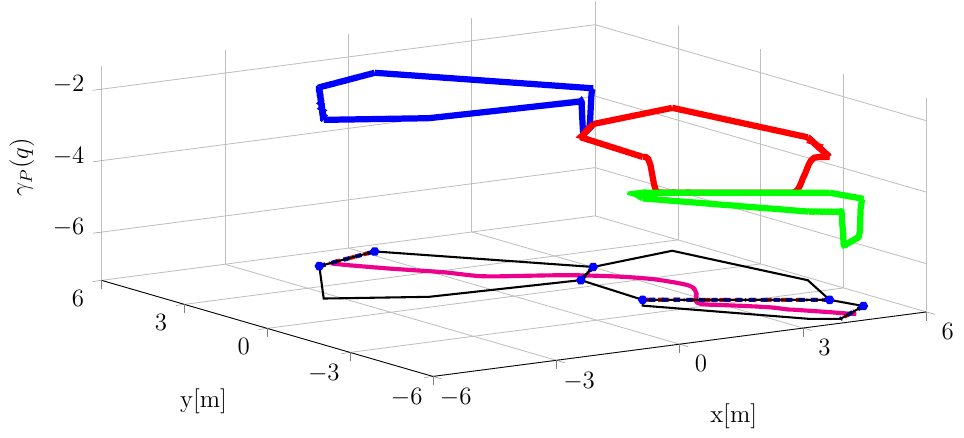}
    \caption{``Waterfall'' effect of the surfaces along a path segment (surfaces and trajectory are vertically shifted for clarity).}
    \label{fig:waterfall}
\end{figure}

Lastly, Fig.~\ref{fig:control} illustrates the control actions (the \emph{rhs} of \eqref{eq:switched_dynamics}), corresponding to the trajectory illustrated in Fig.~\ref{fig:path}. 
\begin{figure}[!ht]
\centering
\includegraphics[width=.8\columnwidth]{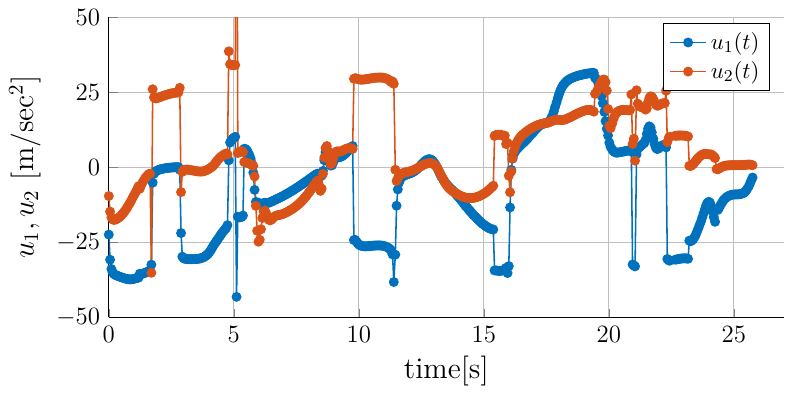}
    \caption{Control actions (our method)}
    \label{fig:control}
\end{figure}
We observe behavior comparable to \cite{conner2003construction} in terms of mean and standard deviation for both the norm and the individual components of the control action, as reported in Table~\ref{tab:stats}.
\begin{table}[!ht]
  \centering
  \renewcommand{\arraystretch}{1}
\small
\begin{tabular}{llcc}
\toprule
\textbf{Metric} & \textbf{Method} & \textbf{Mean} & \textbf{Std} \\ 
\midrule

\multirow{2}{*}{$\|u\|_2$}
& Our paper & 13.94 & 11.63 \\
& \cite{conner2003construction} & 13.59 & 10.91 \\
\midrule

\multirow{2}{*}{$u_1$}
& Our paper & -7.74 & 18.03 \\
& \cite{conner2003construction} & -8.29 & 18.54 \\
\midrule

\multirow{2}{*}{$u_2$}
& Our paper & 5.98 & 15.46 \\
& \cite{conner2003construction} & 4.68 & 13.18 \\
\bottomrule
\end{tabular}
  \caption{Input statistics for the example shown in Fig.~\ref{fig:path}.}
  \label{tab:stats}
\end{table}

The code is available at \url{gitlab.com/replan/replan-public.git} in the ``Harmonic{\_}Potential'' sub-directory. It makes extensive use of CasADi’s \cite{Andersson2019} symbolic manipulation and automatic differentiation capabilities. 

\section{Conclusions}
\label{sec:con}


This paper has detailed the constructive underpinnings of a harmonic function-based motion-planning framework, in which the boundary conditions are induced by cardinal B-spline curves. The core contribution is a general treatment of control point selection, enabling a systematic and flexible design of boundary behavior. This, in turn, yields smooth potential surfaces expressed as harmonic functions over an arbitrary polyhedral decomposition of the configuration space. Future work will focus on further exploiting the richness of spline descriptions (e.g. variable knot vectors and alternative boundary constraints), on clarifying the link between cell decomposition and trajectory performance, and on reducing trajectory variations when transitioning between active cells.

\section*{References}

\bibliographystyle{plain}
\bibliography{bib}






\end{document}